  \providecommand\BibTeX{{%
    Bib\TeX}}}
\definecolor{gray}{RGB}{221, 221, 221}
\theoremstyle{stype}
\newtheorem{thm}{Theorem}
\crefname{section}{§}{§§}
\Crefname{section}{§}{§§}
\newcolumntype{L}{@{}>{\kern\tabcolsep}l<{\kern\tabcolsep}}
\newcommand{\tabincell}[2]{\begin{tabular}{@{}#1@{}}#2\end{tabular}}
\def\model{BGCH}
\def\emb{\boldsymbol} 
\DeclareMathOperator\sign{sign}
\DeclareMathOperator\argmin{argmin}
\DeclareMathOperator\diag{diag}
\newenvironment{sequation}{\begin{equation}\setlength\abovedisplayskip{2pt}\setlength\belowdisplayskip{2pt}}{\end{equation}}
\newcommand\notsotiny{\@setfontsize\notsotiny\@vipt\@viipt}
\def\BibTeX{{\rm B\kern-.05em{\sc i\kern-.025em b}\kern-.08em
    T\kern-.1667em\lower.7ex\hbox{E}\kern-.125emX}}
\def\drop{\color{blue} }
\begin{document}

\title{Bipartite Graph Convolutional Hashing for Effective and Efficient Top-N Search in Hamming Space
}

\author{
Yankai Chen$^1$,
Yixiang Fang$^2$,
Yifei Zhang$^1$,
Irwin King$^1$
}

 \affiliation{
  \city{\{ykchen, yfzhang, king\}@cse.cuhk.edu.hk \quad fangyixiang@cuhk.edu.cn}\\
  \country{$^1$The Chinese University of Hong Kong \quad $^2$The Chinese University of Hong Kong, Shenzhen}
}
\renewcommand{\shortauthors}{Yankai Chen et al.}

\begin{abstract}
Searching on bipartite graphs is basal and versatile to many real-world Web applications, e.g., online recommendation, database retrieval, and query-document searching.
Given a query node, the conventional approaches rely on the similarity matching with the vectorized node embeddings in the continuous Euclidean space. 
To efficiently manage intensive similarity computation, developing hashing techniques for graph-structured data has recently become an emerging research direction.
Despite the retrieval efficiency in Hamming space, prior work is however confronted with \textit{catastrophic performance decay}.  
In this work, we investigate the problem of hashing with Graph Convolutional Network on bipartite graphs for effective Top-N search.
We propose an end-to-end \textit{\underline{B}ipartite \underline{G}raph \underline{C}onvolutional \underline{H}ashing} approach, namely \model, which consists of three novel and effective modules: (1) \textit{adaptive graph convolutional hashing}, (2) \textit{latent feature dispersion}, and (3) \textit{Fourier serialized gradient estimation}.
Specifically, the former two modules achieve the substantial retention of the structural information against the inevitable information loss in hash encoding;
the last module develops Fourier Series decomposition to the hashing function in the frequency domain mainly for more accurate gradient estimation.
The extensive experiments on six real-world datasets not only show the performance superiority over the competing hashing-based counterparts, but also demonstrate the effectiveness of all proposed model components contained therein.
\end{abstract}

\vspace{-2cm}
\begin{CCSXML}
<ccs2012>
   <concept>
       <concept_id>10010147.10010257.10010293.10010319</concept_id>
       <concept_desc>Computing methodologies~Learning latent representations</concept_desc>
       <concept_significance>500</concept_significance>
       </concept>
   <concept>
       <concept_id>10002951.10003317</concept_id>
       <concept_desc>Information systems~Information retrieval</concept_desc>
       <concept_significance>500</concept_significance>
       </concept>
 </ccs2012>
\end{CCSXML}

\ccsdesc[500]{Computing methodologies~Learning latent representations}
\ccsdesc[500]{Information systems~Information retrieval}

\keywords{Representation Learning; Learning to Hash; Graph Convolutional Network; Bipartite Graph; Hamming Space Search}

\maketitle

\section{\textbf{Introduction}}

Bipartite graphs are ubiquitous in the real world for the ease of modeling various Web applications, e.g., as shown in Figure~\ref{fig:intro}(a), user-product recommendation~\cite{ma2020probabilistic,zhang2019star} and online query-document matching~\cite{zhang2019doc2hash}.
A fundamental task, \textit{Top-N search}, is to filter out N best-matched graph nodes for a query node, e.g., recommending Top-N attractive products to a target user in the user-product graph.
With the development of the recent machine learning research, learning vectorized representations (\textit{a.k.a.} embeddings) nowadays has become one of the standard procedures for similarity matching~\cite{grover2016node2vec,cheng2018aspect}.
Among existing techniques, graph-based neural methods, i.e., \textit{Graph Convolutional Networks} (GCNs), have recently present remarkable model performance~\cite{graphsage,lightgcn}.
Due to the ability to capture high-order connection information, GCN models can thus produce semantic enrichment to the node embeddings.
Based on the learned embeddings, similarity estimation is then exhaustively proceeded in the continuous Euclidean space.

Apart from embedding informativeness, \textit{computation latency} and \textit{embedding memory overhead} are two important criteria for realistic application deployment.
With the explosive data growth, \textit{learning to hash}~\cite{wang2017survey,jegou2010product} recently provides an alternative option to graph-based models for optimizing the model scalability.
Generally, it learns to convert the vectorized list of continuous values into the finite binarized hash codes.
In lieu of using \textit{full-precision}\footnote{\scriptsize The term ``full-precision'' generally refers to single-precision and double-precision. And we use float32 by default throughout this work for illustration.} 
embeddings, the learned hash codes have the promising potential to achieve, not only the space reduction, but also the computation acceleration for Top-N object matching and retrieval in the Hamming space.

\begin{figure}[tp]
\begin{minipage}{0.5\textwidth}
\includegraphics[width=3.4in]{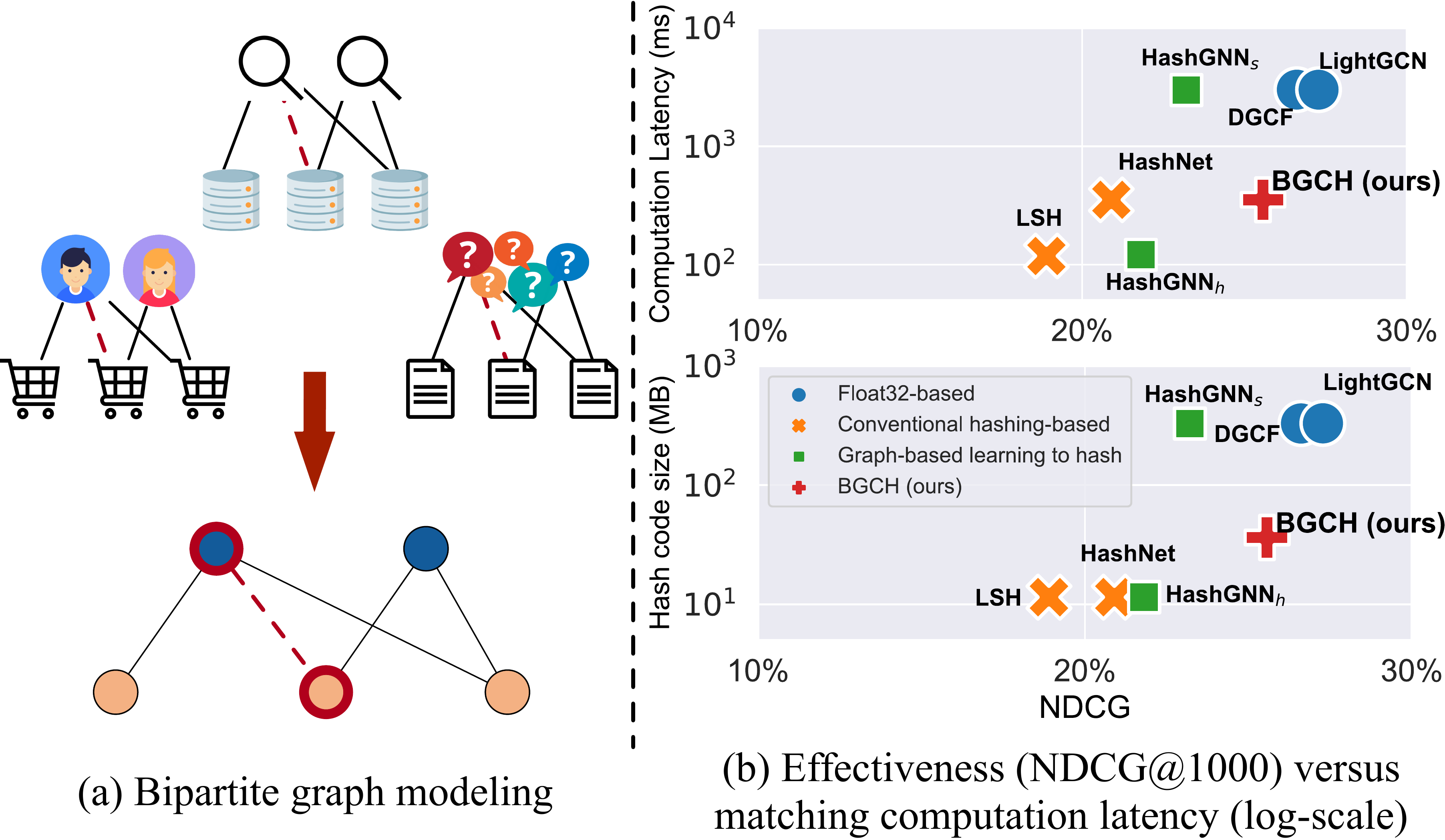}
\end{minipage} 
\vspace{-0.1in}
\caption{Illustration of bipartite graph modeling and overall model performance visualization on Dianping dataset.}
\label{fig:intro}
\end{figure}

Despite the promising advantages of bridging GCNs and learning to hash,
simply stacking these two techniques is trivial and thus falls short of performance satisfaction with several inadequacies:
\begin{itemize}[leftmargin=*]

\item \textbf{Coarse-grained similarity measurement.}
Compared to continuous embeddings, hash codes with the same vector dimension are naturally \textit{less expressive} with finite encoding permutation in Hamming space (e.g., $2^d$ if the dimension is $d$).
Consequently, this leads to a coarse-grained estimation of the pairwise node similarity, thus drawing a conspicuous performance decay with inaccurate Top-N matching.

\item \textbf{Feature erosion issue.}
Recent work~\cite{qin2020forward,rastegari2016xnor,lin2017towards,hashgnn} usually adopts $\sign(\cdot)$ function for $O(1)$ complexity encoding.
However, hashing via $\sign(\cdot)$ will inevitably smooth the embedding feature informativeness, via converting each digit of continuous embeddings into the hamming space, no matter what specific value it used to be.
Thus the latent features in these learned hash embeddings become less informative, and existing models lack certain mechanisms to hedge the feature erosion in hashing.

\item \textbf{Intractable model optimization.}
Since $\sign(\cdot)$ is not differentiable at 0 and its derivatives are 0 anywhere else, previous models usually use \textit{visually similar} but not necessarily \textit{theoretically relevant} functions, e.g., $\tanh(\cdot)$, for gradient estimation.
This may lead to inconsistent optimization directions in model training.
Moreover, because of the embedding discreteness, the associated loss landscape\footnote{\scriptsize Details about the visualization construction are attached in Appendix A.} (Figure~\ref{fig:model}(a)) are steep and bumping~\cite{bai2020binarybert}, which further increases the difficulty in optimization.
\end{itemize}

In this paper, we study the problem of learning to hash with Graph Convolutional Network (GCN) on bipartite graphs for effective Top-N Hamming space search.
We propose a model namely \textit{\underline{B}ipartite \underline{G}raph \underline{C}onvolutional \underline{H}ashing} (BGCH), with three effective modules: (1) \textit{adaptive graph convolutional hashing}, (2) \textit{latent feature dispersion}, and (3) \textit{Fourier serialized gradient estimation}.
While the former two modules significantly enrich the informativeness and expressivity to the learned hash codes, the last one provides an accordant and tractable optimization flow in forward and backward propagation of model optimization.
Concretely:
\begin{itemize}[leftmargin=*]
\item \textbf{Adaptive graph convolutional hashing.}
Our first module designs a \textit{topology-aware} convolutional hashing that employs the layer-wise hash encoding (from low- to high-order sub-structures of bipartite graphs) to consecutively binarize the node features with different semantics.
To boost the expressivity, the convolutional hashing is equipped an effective approximation technique for \textit{embedding rescaling}, which does not undermine the efficiency of Hamming distance computation.
Intuitively, these two designs make the learned hash codes more informative and expressive for preserving fine-grained similarity in the Hamming space.

\item \textbf{Latent feature dispersion}.
Our second module, i.e., feature dispersion, aims to hedge the inevitable information loss from the numerical binarization.
In conventional continuous embeddings, major features however condense in a small region of embedding structures.
Since these vectorized latent features tend to be inevitably smoothed by the hashing discreteness, it is natural to preserve information as much as possible by spreading out those decisive features to the majority of embedding dimensions, instead of the very few of them.
To achieve this, our proposed module aims to explicitly disperse informative latent node features, which can be further diffused to each convolutional layer when exploring the bipartite graph.

\item \textbf{Fourier serialized gradient estimation.}
Furthermore, to provide accurate gradient estimation, \model~proposes to decompose $\sign(\cdot)$ function with Fourier Series in the frequency domain.
Compared to existing gradient estimators~\cite{qin2020forward,gong2019differentiable,darabi2018bnn,sigmoid,RBCN}, this estimator better follows the \textit{main direction} of factual gradients to enable an accordant and tractable model optimization in forward and backward propagation.
With the limited number of decomposition terms, \model~can well provide more accurate gradient estimation to $\sign(\cdot)$ within the acceptable training cost.
\end{itemize}

Based on the learned hash codes, \model~maintains moderate resource consumption whilst providing substantial performance improvement in Top-N Hamming space retrieval.
The quality-cost trade-off is summarized in Figure~\ref{fig:intro}(b), which compares \model~ against a list of representative counterparts (\textit{including float32-based and hashing-based}) on a real-world bipartite graph with over 10 million observed edges (experimental details are reported in~\cref{sec:exp_setup}).
As the lower-right corner of Figure~\ref{fig:intro}(b) indicates the ideal optimal performance, \model~can deliver over 8$\times$ computation acceleration and space reduction relative to existing full-precision models, while being more effective than each hash-based method (\cref{sec:exp_topn} and \cref{sec:exp_full}).
To summarize, our main contributions are organized as follows:
\begin{itemize}[leftmargin=*]
\item 
We study the problem of learning to hash with Graph Convolutional Network on bipartite graphs.
We propose a novel approach \model~with three effective modules for effective and efficient Top-N search in Hamming space (\cref{sec:method}).

\item We conduct extensive experiments on six real-world datasets to evaluate the retrieval quality.
In-depth analyses are also provided towards the necessity of all proposed model components from both technical and empirical perspectives (\cref{sec:exp}).


\item We theoretically prove the model effectiveness and provide complexity analyses in terms of time and space costs (Appendix C).
\end{itemize}

\section{\textbf{Related Work}}
\label{sec:work}

{\textbf{Graph convolution network (GCN)}.}
Early work studies the graph convolutions mainly on the \textit{spectral domain}, such as Laplacian eigen-decomposition~\cite{bruna2013spectral} and Chebyshev polynomials~\cite{defferrard2016convolutional}.
One major issue is that these models are usually computationally expensive. 
To tackle this problem, \textit{spatial-based} GCN models are proposed to re-define the graph convolution operations by aggregating the embeddings of neighbors to refine and update the target node's embedding.
Due to its scalability to large graphs, spatial-based GCN models are widely used in various applications~\cite{lightgcn,ngcf,graphsage}. 
For example, to capture high-order structural information, NGCF~\cite{ngcf} and LightGCN~\cite{lightgcn} learn the collaborative filtering signals on bipartite interaction graphs for recommendation.
Despite the effectiveness in embedding latent features for graph nodes, they usually suffer from inference inefficiency due to the high computational cost of similarity calculation between continuous embeddings~\cite{hashgnn}.
To address this issue, \textit{learning to hash} provides the feasibility.

{\textbf{Learning to hash}.}
Learning to hash models are promising to achieve computation acceleration and storage reduction for general information retrieval and processing tasks~\cite{hu2021semi,hu2020selfore,li2020unsupervised,lightgcn,gao2020discern,chen2022repo4qa,ma2019hierarchical}.
More than reducing conflicts~\cite{kraska2018case},  similarity-preserving hashing maps high-dimensional dense vectors to a low-dimensional Hamming space for efficiently processing downstream tasks.
A representative model is Locality Sensitive Hashing (LSH)~\cite{lsh} that uses random projections as the hash functions.
Recent work focuses on integrating the deep neural network architectures for model improvement~\cite{wang2017survey}.
They inspire a series of follow-up work for various tasks, such as fast retrieval of images~\cite{qin2020forward,lin2017towards,hashnet}, documents~\cite{li2014two,chen2022effective}, categorical information~\cite{kang2021learning}, e-commerce products~\cite{zhang2017discrete,chen2022learning}.

To leverage hashing techniques with GCNs, the state-of-the-art work HashGNN~\cite{hashgnn} investigates \text{learning to hash} for online matching and recommendation.
Specifically, HashGNN consectively combines the GraphSage~\cite{graphsage} as the embedding encoder and learning to hash method to get the corresponding binary encodings afterwards.
Its hash encoding process only proceeds at the end of multi-layer graph convolutions, i.e., using the aggregated output of GraphSage for representation binarization. 
However, this fails to capture intermediate semantics from nodes' different layers of receptive fields~\cite{kipf2016semi}.
The other issue of HashGNN is using \textit{Straight-Through Estimator (STE)}~\cite{bengio2013estimating} to assume all gradients of $\sign(\cdot)$ as 1 in backpropagation.
However, the integral of 1 is a certain linear function other than the $\sign(\cdot)$, whereas this may lead to inconsistent optimization directions in the model training.
To address these issues, our model \model~ is proposed with effectiveness justification in~\cref{sec:exp}.

\section{Preliminaries and Problem Formulation}
\label{sec:pre}

\noindent{\textbf{Graph Convolution Network (GCN).}}
The general idea of GCN is to learn node embeddings by \textit{iteratively propagating and aggregating} latent features of node neighbors via the graph topology~\cite{wu2019simplifying,lightgcn,kipf2016semi}:
\begin{equation}
\setlength\abovedisplayskip{2pt}
\setlength\belowdisplayskip{2pt}
\boldsymbol{V}_x^{(l)} = AGG\left(\boldsymbol{V}_x^{(l-1)}, \{\boldsymbol{V}_z^{(l-1)}: z \in \mathcal{N}(x)\}\right),
\end{equation}%
where {\small$\emb{V}_x^{(l)} \in \mathbb{R}^d$} denotes node $x$'s embedding after $l$-th iteration of graph convolutions, indexed in the embedding matrix {\small $\emb{V}$}. 
{\small $\mathcal{N}(x)$} is the set of $x$'s neighbors.
Function $AGG(\cdot, \cdot)$ is the information aggregation function, with several implementations in previous work~\cite{kipf2016semi,graphsage,gat,xu2018powerful}, mainly aiming to transform the center node feature and the neighbor features.
In this work, we adopt the graph convolution paradigm from the state-of-the-art model LightGCN~\cite{lightgcn}.

\vspace{0.05in}

\noindent\textbf{Bipartite Graph and Adjacency Matrix.} 
The bipartite graph is denoted as $\mathcal{G}=\{\mathcal{V}_1, \mathcal{V}_2, \mathcal{E}\}$, where $\mathcal{V}_1$ and  $\mathcal{V}_2$ are two \textit{disjoint} node sets and $\mathcal{E}$ is the set of edges between nodes in $\mathcal{V}_1$ and $\mathcal{V}_2$.
We can use $\emb{Y} \in \mathbb{R}^{|\mathcal{V}_1|\times |\mathcal{V}_2}|$ to indicate the edge transactions, where 1-valued entries, i.e., $\emb{Y}_{x,y}=1$, indicate there is an observed edge between nodes $x\in \mathcal{V}_1$ and $y \in \mathcal{V}_2$, otherwise $\emb{Y}_{x,y}=0$.
Then the adjacency matrix $\emb{A}$ of the whole graph can be defined as:
\begin{equation}
\setlength\abovedisplayskip{2pt}
\setlength\belowdisplayskip{2pt}
\emb{A} = 
\begin{bmatrix}
\emb{0} & \emb{Y} \\
\emb{Y}^T & \emb{0}
\end{bmatrix}.
\end{equation}%


\noindent\textbf{Problem Formulation.}
Give a bipartite graph $\mathcal{G}=\{\mathcal{V}_1, \mathcal{V}_2, \mathcal{E}\}$ and its adjacency matrix $\emb{A}$, we devote to learn a hashing function:
\begin{equation}
\setlength\abovedisplayskip{2pt}
\setlength\belowdisplayskip{2pt}
F(\emb{A} | \Theta) \rightarrow \mathcal{Q},
\end{equation}
where $\Theta$ is the set of all learnable parameters. 
Given two nodes in the bipartite graph, e.g., $x\in \mathcal{V}_1$ and $y \in \mathcal{V}_2$, their hash codes are $\mathcal{Q}_x$ and $\mathcal{Q}_y$.
Then the probability of edge existence $\widehat{\emb{Y}}_{x,y}$ between nodes $x\in \mathcal{V}_1$ and $y \in \mathcal{V}_2$ can be effectively and efficiently measured by the hash codes $\mathcal{Q}_x$ and  $\mathcal{Q}_y$, i.e., $\widehat{\emb{Y}}_{x,y}$ = $f(\mathcal{Q}_x, \mathcal{Q}_y)$ where $f$ is a score function.
Intuitively, the larger value $\widehat{\emb{Y}}_{x,y}$ is, the more likely $x$ and $y$ are matched, i.e., an edge between $x$ and $y$ exists.
Explanations of key notations used in this paper are attached in Appendix B.

\section{\model: Methodology}
\label{sec:method}
\begin{figure*}[tp]
\hspace{-0.1in}
\begin{minipage}{1\textwidth}
\includegraphics[width=7.1in]{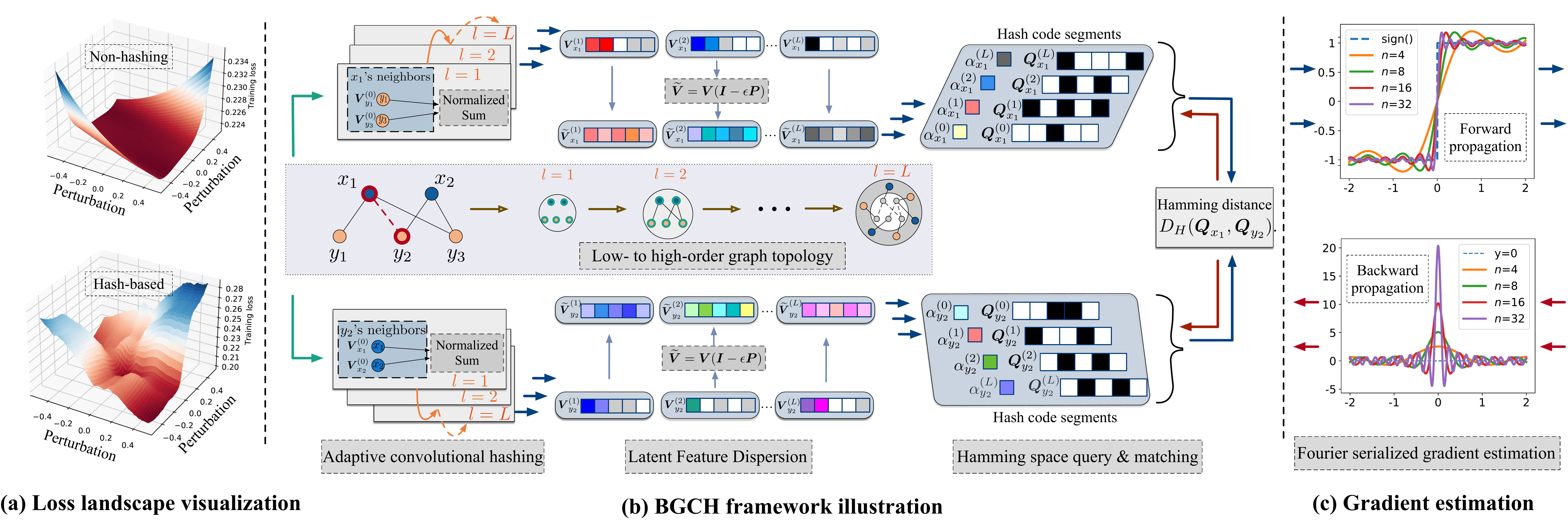}
\end{minipage} 
\vspace{-0.1in}
\caption{(a) Visualized loss landscape comparison; (b) \model~model framework (best view in color); (c) Fourier Serialized gradient estimation in forward and bachward propagation. }
\label{fig:model}
\end{figure*}

\subsection{Overview}
We formally introduce our \model~ model.
Notice that since the effect of feature dispersion module propagates along with convolutional hashing, we then introduce these modules in the following order:
(1) \textit{latent feature dispersion} (\cref{sec:fd}) aims to disperse the embedded features into wider embedding structures to hedge the inevitable information loss in hashing;
(2) \textit{adaptive graph convolutional hashing} (\cref{sec:hashing}) provides an effective encoding approach to significantly improve the hashed feature expressivity whilst maintaining the matching efficiency in the hamming space;
(3) \textit{Fourier serialized gradient estimation} (\cref{sec:ge}) introduces the Fourier Series decomposition for $\sign(\cdot)$ in the frequency domain to provide more accurate gradient approximation.
Based on the learned hash codes, \model~ develops efficient online matching with the Hamming distance measurement (\cref{sec:score}).
Our model illustration is attached in Figure~\ref{fig:model}(b).


\subsection{Latent Feature Dispersion}
\label{sec:fd}

To tackle the feature erosion issue, we seek to disperse the embedded features as one effective strategy to hedge the inevitable information loss caused by numerical binarization.
From the perspective of singular value decomposition (SVD), singular values and corresponding singular vectors reconstruct the original matrix;
normally, large singular values can be interpreted to associate with \textit{major feature structures} of the matrix~\cite{wei2018grassmann}. 
Since we want to avoid condensing and gathering informative features in (relatively small) embedding sub-structures, it is natural to bridge the target by working on these singular values.
Hence, based on this intuition, we aim to \textit{normalize singular values for equalizing their respective contributions in constituting latent features}.
To achieve this, Power Normalization~\cite{koniusz2016higher,zhang2022spectral} is one of the solutions that tackle related problems such as feature imbalance~\cite{koniusz2018deeper}.
Inspired by the recent approximation attempt~\cite{yu2020toward}, we now introduce a lightweight feature dispersion technique in graph convolution as follows.

Concretely, let $\emb{I}$ denote the identity matrix, we start from generating a \textit{standard normal random vector} $\emb{p}^{(0)}$$\sim$$\mathcal{N}(\emb{0}, \emb{I})$ where $\emb{p}^{(0)}$ $\in$ $\mathbb{R}^{c}$.
Based on the embedding matrix to conduct feature dispersion, e.g., let $\emb{V}= \emb{V}^{(0)}$, we compute the desired \textbf{dispersing vector} $\emb{p}^{(k)}$ by iteratively performing $\emb{p}^{(k)} = \emb{V}^\mathsf{T}\emb{V}\emb{p}^{(k-1)}$.
The iteration for generating dispersing vectors is independent of the graph convolution iterations\footnote{\scriptsize In our work, we set $K \leq L$ mainly to enable the associated complexity of dispersing vector generation is upper bounded by the graph convolution complexity.}.
We have the projection matrix $\emb{P}$ of $\emb{p}^{(K)}$ via:
\begin{sequation}
\label{eq:projection}
\emb{P} = \frac{\emb{p}^{(K)}\emb{p}^{(K)^\mathsf{T}}}{||\emb{p}^{(K)}||_2^2}.
\end{sequation}%
Then we have the feature-dispersed representation matrix with the hyper-parameter $\epsilon$ $\in$ $(0,1)$ as follows:
\begin{sequation}
\label{eq:disperse}
\widetilde{\emb{V}} = \emb{V}(\emb{I} - \epsilon \emb{P}).
\end{sequation}%
Consequently, integrating the dispersed matrix $\widetilde{\emb{V}}$, we have the \textbf{feature-dispersed} graph convolution as:
\begin{sequation}
\label{eq:fdconv}
\widetilde{\emb{V}}^{(l+1)} = (\emb{D}^{-\frac{1}{2}} \emb{A} \emb{D}^{-\frac{1}{2}} )\widetilde{\emb{V}}^{(l)},  \text{ where } \widetilde{\emb{V}}^{(0)} = \emb{V}^{(0)}(\emb{I} - \epsilon \emb{P}).
\end{sequation}%
Note that we explicitly conduct this feature dispersion operation one time only at the initial step, i.e., {\small $\widetilde{\emb{V}}^{(0)}$}, and, more importantly, such feature dispersion can be diffused via the multi-layer graph convolutions from $0$ to $L$.
Compare to the unprocessed embedding counterpart, e.g., {\small${\emb{V}}^{(l)}$}, embedding matrix {\small$\widetilde{\emb{V}}^{(l)}$} at each layer presents a dispersed feature structure with a \textit{more balanced distribution of singular values in expection}. 
We formally explain this as follows: 
\vspace{-0.05in}
\begin{thm}[\textbf{Feature Dispersion}]
\label{tm:svd}
Let ${\emb{V}}^{(l)} = \emb{U}_1\emb{\Sigma}\emb{U}_2^\mathsf{T}$, where $\emb{U}_1$ and $\emb{U}_2$ are unitary matrices and descending singular value matrix $\emb{\Sigma} = \diag(\sigma_1, \sigma_2, \cdots, \sigma_c)$.  
Then $\mathbb{E}({\small\widetilde{\emb{V}}^{(l)}}) = \emb{U}_1\emb{\Sigma}\emb{\Sigma}_{\mu}\emb{U}_2^\mathsf{T}$ where $\emb{\Sigma}_{\mu} = \diag(\mu_1, \mu_2, \cdots, \mu_c)_{0<\mu_{1 \cdots c}<1}$ is in ascending order.
\end{thm}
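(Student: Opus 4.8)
The plan is to diagonalize the entire dispersion operation in the right-singular basis of $\emb{V}^{(l)}$ and thereby reduce the claim to a statement about the expected diagonal of a normalized rank-one projector. First I would unroll the power iteration into $\emb{p}^{(K)} = (\emb{V}^{(l)\mathsf{T}}\emb{V}^{(l)})^K\emb{p}^{(0)}$ and substitute the SVD $\emb{V}^{(l)} = \emb{U}_1\emb{\Sigma}\emb{U}_2^\mathsf{T}$, which gives $\emb{V}^{(l)\mathsf{T}}\emb{V}^{(l)} = \emb{U}_2\emb{\Sigma}^2\emb{U}_2^\mathsf{T}$ and hence $\emb{p}^{(K)} = \emb{U}_2\emb{\Sigma}^{2K}\emb{U}_2^\mathsf{T}\emb{p}^{(0)}$. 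Setting $\emb{q} = \emb{U}_2^\mathsf{T}\emb{p}^{(0)}$ and invoking the rotation invariance of the standard Gaussian (so $\emb{q}\sim\mathcal{N}(\emb{0},\emb{I})$ because $\emb{U}_2$ is orthogonal), I can write $\emb{p}^{(K)} = \emb{U}_2\emb{r}$ with $r_i = \sigma_i^{2K}q_i$. The projector then factors as $\emb{P} = \emb{U}_2\emb{M}\emb{U}_2^\mathsf{T}$ with $\emb{M} = \emb{r}\emb{r}^\mathsf{T}/\|\emb{r}\|_2^2$, and substituting into $\widetilde{\emb{V}}^{(l)} = \emb{V}^{(l)}(\emb{I}-\epsilon\emb{P})$ collapses the unitary factors via $\emb{U}_2^\mathsf{T}\emb{U}_2 = \emb{I}$ to $\widetilde{\emb{V}}^{(l)} = \emb{U}_1\emb{\Sigma}(\emb{I}-\epsilon\emb{M})\emb{U}_2^\mathsf{T}$. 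Taking expectations, it suffices to show that $\emb{\Sigma}_\mu := \emb{I}-\epsilon\,\mathbb{E}[\emb{M}]$ is diagonal with ascending entries in $(0,1)$.

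Next I would establish that $\mathbb{E}[\emb{M}]$ is diagonal. Its entries are $M_{ij} = \sigma_i^{2K}\sigma_j^{2K}q_iq_j / \sum_k\sigma_k^{4K}q_k^2$, and flipping the sign of a single coordinate $q_i\mapsto -q_i$ leaves the even denominator and every $q_k^2$ invariant while sending the off-diagonal numerator $q_iq_j$ (for $i\neq j$) to its negative, so $\mathbb{E}[M_{ij}] = 0$ whenever $i\neq j$. Writing $\nu_i := \mathbb{E}[M_{ii}] = \mathbb{E}\bigl[\sigma_i^{4K}q_i^2 / \sum_k\sigma_k^{4K}q_k^2\bigr]$ and $\mu_i := 1-\epsilon\nu_i$ then gives $\emb{\Sigma}_\mu = \diag(\mu_1,\dots,\mu_c)$ exactly as claimed. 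The range $0<\mu_i<1$ follows because each $\nu_i$ is the expectation of a fraction lying strictly in $(0,1)$ almost surely with $\sum_i\nu_i = 1$, so $\nu_i\in(0,1)$, and combining with $\epsilon\in(0,1)$ yields $\mu_i\in(1-\epsilon,1)\subset(0,1)$.

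The main obstacle is the ordering claim, namely that $\mu_i$ is ascending, equivalently that $\nu_i$ is nonincreasing in $i$ (nondecreasing in $\sigma_i$). I would prove this by a pairwise-swap symmetrization: fixing all coordinates except $i,j$ and abbreviating $a_i = \sigma_i^{4K}$, $u = q_i^2$, $v = q_j^2$, and $S' = \sum_{k\neq i,j}\sigma_k^{4K}q_k^2$, I average $\nu_i-\nu_j$ against the exchange $u\leftrightarrow v$, which is legitimate since $q_i,q_j$ are i.i.d. Placing the two resulting fractions over the common denominator $D_1D_2$ with $D_1 = a_iu+a_jv+S'$ and $D_2 = a_iv+a_ju+S'$, the cross terms cancel and the numerator factors as $(a_i-a_j)\bigl[2(a_i+a_j)uv + S'(u+v)\bigr]$, which is pointwise nonnegative whenever $\sigma_i\ge\sigma_j$. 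Hence $\nu_i\ge\nu_j$ and the ascending order of $\mu_i$ follows. The delicate part is carrying out this symmetrization and the cancellation cleanly, since a direct monotonicity estimate on the unsymmetrized ratio $(a_iu-a_jv)/D_1$ has no definite sign term-by-term and only the averaged form exposes the nonnegative factorization.
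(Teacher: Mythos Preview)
Your proof is correct and follows the same overall SVD-diagonalization skeleton as the paper: pass to the right-singular basis via $\emb{q}=\emb{U}_2^\mathsf{T}\emb{p}^{(0)}\sim\mathcal{N}(\emb{0},\emb{I})$, factor $\widetilde{\emb{V}}^{(l)}=\emb{U}_1\emb{\Sigma}(\emb{I}-\epsilon\emb{M})\emb{U}_2^\mathsf{T}$, and then analyze the diagonal of $\mathbb{E}[\emb{M}]$. The substantive difference is in how the two handle the random normalizing denominator $\sum_k\sigma_k^{4K}q_k^2$. For the off-diagonal vanishing, the paper factors out the denominator and appeals to $\mathbb{E}[\emb{t}\emb{t}^\mathsf{T}]=\emb{I}$, whereas your sign-flip argument keeps the (even) denominator inside the expectation and is therefore valid without any such separation. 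For the ordering, the paper lower-bounds $\nu_{k_2}-\nu_{k_1}$ by $\sigma_{k_1}^{4K}\,\mathbb{E}\bigl[(t_{k_2}^2-t_{k_1}^2)/\sum_j\sigma_j^{4K}t_j^2\bigr]$ and asserts this residual expectation is zero by the i.i.d.\ property; but the denominator weights $t_{k_1}^2$ and $t_{k_2}^2$ differently, so the swap does not leave the ratio invariant, and that step is not justified as written. Your full pairwise symmetrization---averaging against $u\leftrightarrow v$ over the common denominator $D_1D_2$ and factoring the numerator as $(a_i-a_j)\bigl[2(a_i+a_j)uv+S'(u+v)\bigr]$---sidesteps this entirely and gives a pointwise nonnegative integrand. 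In short, the architecture matches the paper but your treatment of the ratio expectations is the more rigorous one, and your closing remark about why the unsymmetrized estimate fails is exactly the right diagnosis.
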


\vspace{-0.05in}
Intuitively, given the same orthonormal bases, compared to {\footnotesize$\emb{V}^{(l)}$}, it is harder in expectation to reconstruct {\footnotesize$\widetilde{\emb{V}}^{(l)}$} with informative features being dispersed out in larger matrix sub-structures.
This eventually provides the functionality to hedge the information loss in numerical binarization. 
We attach the theorem proof in Appendix C and evaluate the module effectiveness later in~\cref{sec:ablation}.

\subsection{Adaptive Graph Convolutional Hashing}
\label{sec:hashing}
One feasible solution for increasing expressivity and smoothing loss landscapes is to include the \textit{relaxation strategy}.
Hence, apart from the topology-aware embedding binarization with $\sign(\cdot)$:
\begin{equation}
\setlength\abovedisplayskip{2pt}
\setlength\belowdisplayskip{2pt}
\label{eq:hashing}
\emb{Q}_{x}^{(l)} = \sign(\widetilde{\emb{V}}_{x}^{(l)}),
\end{equation}
our model \model~additionally computes a layer-wise positive rescaling factor for each node, e.g., $\alpha_x^{(l)} \in \mathbb{R}^+$, such that $\widetilde{\emb{V}}^{(l)}_x \approx$ $\alpha_x^{(l)} \emb{Q}^{(l)}_x$.
In this work, we introduce a simple but effective approach to directly calculate the rescaling factors as follows: 
\begin{equation}
\setlength\abovedisplayskip{2pt}
\setlength\belowdisplayskip{2pt}
\label{eq:rescale}
\alpha_x^{(l)} = \frac{1}{d} ||\widetilde{\emb{{V}}}_x^{(l)}||_1.
\end{equation}
Instead of setting these factors as learnable, such deterministic computation substantially prunes the parameter search space while attaining the adaptive approximation functionality for different graph nodes. 
We demonstrate this in~\cref{sec:ablation} of experiments.

After $L$ iterations of feature propagation and hashing, we obtain the table of \textbf{adaptive hash codes} $\mathcal{Q} = \{\emb{\alpha}, \emb{Q}\}$, where $\emb{\alpha} \in \mathbb{R}^{(|\mathcal{V}_1|+|\mathcal{V}_2|)\times (L+1)}$ and $\emb{Q} \in \mathbb{R}^{(|\mathcal{V}_1|+|\mathcal{V}_2|)\times d}\}$.
For each node $x$, its corresponding hash codes are indexed and assembled:
\begin{equation}
\setlength\abovedisplayskip{2pt}
\setlength\belowdisplayskip{2pt}
\emb{\alpha}_x = \alpha_x^{(0)} || \alpha_x^{(1)} || \cdots || \alpha_x^{(L)}, \text{  and  } \emb{{Q}}_x = \emb{Q}_x^{(0)} || \emb{Q}_x^{(1)} || \cdots || \emb{Q}_x^{(L)}.
\end{equation}
Intuitively, the hash code table $\mathcal{Q}$ represents the bipartite structural information that is propagated back and forth at different iteration steps $l$, i.e., from $0$ to the maximum step $L$.
It not only tracks the intermediate knowledge hashed for all graph nodes, but also maintains the value approximation to their original continuous embeddings, e.g., {\footnotesize $\widetilde{\emb{{V}}}_x^{(l)}$}.
In addition, with the slightly more space cost (complexity analysis in Appendix C, such detached hash encoding approach still supports the bitwise operations (~\cref{sec:score}) for accelerating inference and matching.

\subsection{Fourier Serialized Gradient Estimation}
\label{sec:ge}

To provide the accordant gradient estimation for hash function $\sign(\cdot)$, we approximate it by introducing its Fourier Series decomposition in the frequency domain. 
Specifically, $\sign(\cdot)$ can be viewed as a special case of the periodical Square Wave Function $t(x)$ within the length $2H$, i.e., $\sign(\phi) = t(\phi)$, $|\phi| < H$.  
Since $t(x)$ can be decomposed in Fourier Series, we shall have: 
\begin{equation}
\setlength\abovedisplayskip{2pt}
\setlength\belowdisplayskip{2pt}
\sign(\phi) = \frac{4}{\pi}\sum_{i=1,3,5,\cdots}^{+\infty}\frac{1}{i}\sin(\frac{\pi i\phi}{H}), {\rm \ \ where \ \ } |\phi| < H.
\end{equation}


Fourier Series decomposition of $\sign(\cdot)$ with infinite terms is a lossless transformation~\cite{rust2013convergence}.
Thus, as shown in Figure~\ref{fig:model}(c), we can set the finite expanding term $n$ to obtain its approximation version as follows: 
\begin{equation}
\setlength\abovedisplayskip{2pt}
\setlength\belowdisplayskip{2pt}
{\sign(\phi)} \doteq \frac{4}{\pi}\sum_{i=1,3,5,\cdots}^{n}\frac{1}{i}\sin(\frac{\pi i\phi}{H}).  \\
\end{equation}
The corresponding derivatives can be derived accordingly as:
\begin{equation}
\setlength\abovedisplayskip{2pt}
\setlength\belowdisplayskip{2pt}
\label{eq:gradient}
\frac{\partial{{\sign(\phi)}}}{\partial \phi}   \doteq \frac{4}{H} \sum_{i=1,3,5,\cdots}^{n} \cos(\frac{\pi i\phi}{H}). 
\end{equation}

Different from other gradient estimators such as tanh-alike~\cite{gong2019differentiable,qin2020forward} and SignSwish~\cite{darabi2018bnn}, approximating $\sign(\cdot)$ function with its Fourier Series will not corrupt the main direction of factual gradients in model optimization~\cite{xu2021learning}.
This is beneficial to bridge a coordinated transformation from the continuous values to its corresponding binarization for node representations, which significantly retains the discriminability of binarized representations and produces better retrieval accuracy accordingly.
We present this performance comparison in~\cref{sec:fs_exp} of experiments. 
To summarize, as shown in Equation~(\ref{eq:formal_grad}), to learn and optimize the binarized embeddings for graph nodes, we apply the strict $\sign(\cdot)$ function for forward propagation and estimate the gradients $\frac{\partial\sign(\phi)}{\partial \phi}$ for backward propagation.
\begin{equation}
\setlength\abovedisplayskip{2pt}
\setlength\belowdisplayskip{2pt}
\label{eq:formal_grad}
\left\{ 
\begin{aligned}
& \boldsymbol{Q}^{(l)} = \sign(\phi),  &\text{Forward propagation.} \\
& \frac{\partial \boldsymbol{Q}^{(l)}}{\partial \phi} \doteq \frac{4}{H} \sum_{i=1,3,5,\cdots}^{n} \cos(\frac{\pi i\phi}{H}). & \text{Backward propagation.}
\end{aligned}
\right.
\end{equation}

\subsection{Score Prediction and Model Optimization}
\label{sec:score}

\subsubsection{\textbf{Matching score prediction.}}
\label{sec:score_computation}
Given two nodes $x \in \mathcal{V}_1$ and $y \in \mathcal{V}_2$, one natural manner to implement the score function is \textit{inner-product}, mainly for its simplicity as:
\begin{equation}
\setlength\abovedisplayskip{2pt}
\setlength\belowdisplayskip{2pt}
\label{eq:inner_score}
\widehat{\emb{Y}}_{x,y} =  (\alpha_x\emb{Q}_x)^\mathsf{T} \cdot (\alpha_y\emb{Q}_y).
\end{equation}
However, the inner product in Equation~(\ref{eq:inner_score}) is still conducted in the (continuous) Euclidean space with \textit{full-precision arithmetics}.
To bridge the connection between the inner product and Hamming distance measurement, we introduce Theorem~\ref{tm:equal} as follows:

\begin{thm}[\textbf{Hamming Distance Matching}]
\label{tm:equal}
Given two hash codes, we have $(\alpha_x\emb{Q}_x)^\mathsf{T} \cdot (\alpha_y\emb{Q}_y)$ $=$ $\alpha_x\alpha_y$ $(d - 2D_{H}(\emb{Q}_x, \emb{Q}_y))$.
\end{thm}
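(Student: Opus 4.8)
The plan is to reduce the claim to a single combinatorial observation about binary vectors whose entries lie in $\{-1,+1\}$. First I would pull the two scalar rescaling factors outside the inner product, writing $(\alpha_x\emb{Q}_x)^\mathsf{T}\cdot(\alpha_y\emb{Q}_y)=\alpha_x\alpha_y\,(\emb{Q}_x^\mathsf{T}\cdot\emb{Q}_y)$, which is immediate from bilinearity of the inner product together with $\alpha_x,\alpha_y\in\mathbb{R}^+$ being scalars. This isolates the only nontrivial content, namely relating $\emb{Q}_x^\mathsf{T}\cdot\emb{Q}_y$ to the Hamming distance $D_{H}(\emb{Q}_x,\emb{Q}_y)$.

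Second, I would expand the inner product coordinate-wise as $\emb{Q}_x^\mathsf{T}\cdot\emb{Q}_y=\sum_{i=1}^{d}Q_{x,i}Q_{y,i}$ and exploit that every entry is an output of $\sign(\cdot)$, hence lies in $\{-1,+1\}$. The key elementary identity is that for such entries the product $Q_{x,i}Q_{y,i}$ equals $+1$ exactly when the two bits \emph{agree} and $-1$ exactly when they \emph{disagree}. Partitioning the $d$ coordinates into agreements and disagreements, and letting the disagreement set have cardinality $D_{H}(\emb{Q}_x,\emb{Q}_y)$ by the definition of Hamming distance, the number of agreements is $d-D_{H}(\emb{Q}_x,\emb{Q}_y)$.

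Finally, I would combine the counts: the inner product evaluates to $(d-D_{H})\cdot(+1)+D_{H}\cdot(-1)=d-2D_{H}(\emb{Q}_x,\emb{Q}_y)$, and re-inserting the scalar factors from the first step yields $\alpha_x\alpha_y\,(d-2D_{H}(\emb{Q}_x,\emb{Q}_y))$, exactly the claimed identity. The argument is an exact accounting rather than an approximation, so I do not anticipate a genuine obstacle; the one subtlety worth flagging explicitly is that the result hinges on the $\{-1,+1\}$ encoding produced by $\sign(\cdot)$ rather than a $\{0,1\}$ encoding, since under the latter the agreement/disagreement bookkeeping would produce a different constant.

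For completeness I would also remark that the identity extends layer-wise by the same linearity: summing Equation~(\ref{eq:inner_score}) over the $L+1$ concatenated blocks lets each layer contribute its own scalar factor $\alpha_x^{(l)}\alpha_y^{(l)}$ multiplying its own block Hamming distance. The practical payoff the theorem is meant to justify is then transparent: because $\alpha_x\alpha_y$ is a single scalar multiplication, the full matching score is recovered from one Hamming-distance computation (a bitwise \textsc{xor} followed by a \texttt{popcount}) plus $O(1)$ arithmetic, preserving the efficiency of retrieval in Hamming space.
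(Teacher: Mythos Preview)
Your proposal is correct and follows essentially the same approach as the paper: the paper's proof also factors out the scalars implicitly, partitions the $d$ coordinates into agreement and disagreement sets based on the $\{-1,+1\}$ entries, and counts to obtain $\emb{Q}_x^\mathsf{T}\cdot\emb{Q}_y = d - 2D_H(\emb{Q}_x,\emb{Q}_y)$. Your additional remarks on the layer-wise extension and the bitwise-\textsc{xor}/\texttt{popcount} interpretation go slightly beyond what the paper writes in the proof itself, but are consistent with the surrounding discussion.
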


$D_H(\cdot, \cdot)$ denotes the Hamming distance between two inputs.
Based on Theorem~\ref{tm:equal}, we transform the score computation to the Hamming distance matching. 
By doing so, we can reduce most number of the floating-point operations (\#FLOPs) in the original score computation formulation (Equation~(\ref{eq:inner_score})) to efficient hamming distance matching.
This can develop substantial computation acceleration that is analyzed in Appendix C.

\subsubsection{\textbf{Multi-loss Objective Function.}}
Our objective function consists of two components, i.e., graph reconstruction loss $\mathcal{L}_{rec}$ and BPR loss $\mathcal{L}_{bpr}$. 
Generally, these two loss functions harness the regularization effect to each other.
The intuition of such design is: 
\begin{itemize}[leftmargin=*]
\item $\mathcal{L}_{rec}$ reconstructs the observed bipartite graph topology;
\item $\mathcal{L}_{bpr}$ ranks the matching scores computed from the hash codes. 
\end{itemize}
Concretely, we implement $\mathcal{L}_{rec}$ with Cross-entropy loss:  
\begin{equation}
\label{eq:rec}
\setlength\abovedisplayskip{2pt}
\setlength\belowdisplayskip{2pt}
\resizebox{1\linewidth}{!}{$
\displaystyle
\mathcal{L}_{rec} = \sum_{x \in \mathcal{V}_1} \Big(\sum_{y\in \mathcal{N}(x)} \ln\sigma\Big(({\emb{V}}^{(0)}_x)^\mathsf{T} \cdot {\emb{V}}^{(0)}_y\Big) + \sum_{y' \notin \mathcal{N}(x)}\ln\Big(1-\sigma\big(({\emb{V}}^{(0)}_x)^\mathsf{T} \cdot {\emb{V}}^{(0)}_{y'}\big)\Big)\Big),
$}
\end{equation}
where $\sigma$ is the activation function, e.g., Sigmoid.
$\mathcal{L}_{rec}$ bases on the initial continuous embeddings before the graph convolution, e.g., {\small${\emb{V}}^{(0)}_x$}, providing the most fundamental information for topology reconstruction.
As for $\mathcal{L}_{bpr}$, we employ \textit{Bayesian Personalized Ranking} (BPR) loss as:
\begin{equation}
\setlength\abovedisplayskip{2pt}
\setlength\belowdisplayskip{2pt}
\label{eq:hd-bpr}
\mathcal{L}_{bpr} = -\sum_{x \in \mathcal{V}_1} \sum_{\tiny y\in \mathcal{N}(x) \atop y'\notin \mathcal{N}(x)} \ln \sigma(\widehat{\emb{Y}}_{x,y} - \widehat{\emb{Y}}_{x,y'}).
\end{equation}
$\mathcal{L}_{bpr}$ encourages the predicted score of an observed edge to be higher than its unobserved counterparts~\cite{lightgcn}.
Let $\Theta$ denote the set of trainable embeddings regularized by the parameter $\lambda_2$ to avoid over-fitting.
our final objective function is finally defined as:
\begin{equation}
\label{eq:L}
\setlength\abovedisplayskip{2pt}
\setlength\belowdisplayskip{2pt}
\mathcal{L} = \mathcal{L}_{rec} + \lambda_1\mathcal{L}_{bpr} + \lambda_2 ||\Theta||_2^2.
\end{equation} 


So far, we have introduced all technical parts of \model~and attached the pseudocodes in Appendix B.
We present all the theorem proofs and complexity analyses in Appendix C.

\section{\textbf{Experimental Evaluation}}
\label{sec:exp}
We evaluate \model~to answer the following research questions:
\begin{itemize}[leftmargin=*]
\item \textbf{RQ1.} How does \model~perform compared to state-of-the-art hashing-based models in the Top-N Hamming space retrieval?

\item \textbf{RQ2.} what is the performance gap between \model~ and the full-precision models in terms of long-list retrieval quality?

\item \textbf{RQ3.} What are the benefits of proposed components in \model?

\item \textbf{RQ4.} what is the practical \model~resource consumption?

\item \textbf{RQ5.} How does the Fourier Series decomposition perform \textit{w.r.t.} retrieval accuracy and training efficiency?

\end{itemize}

\subsection{\textbf{Experiment Setup}}
\label{sec:exp_setup}

\textbf{Datasets and evaluation metrics.} 
We include six real-world bipartite graphs in Table~\ref{tab:datasets} that are widely evaluated~\cite{lightgcn,chen2021hyper,chen2021attentive,yang2022hrcf,ngcf,zhang2022knowledge}.
We adopt evaluation protocols Recall@N and NDCG@N to measure the Top-N Hamming space ranking capability.
Dataset details and evaluation procedure are explained in Appendix D.

\begin{table}[t]
\centering
\small
\caption{The statistics of datasets.}
\vspace{-0.15in}
\label{tab:datasets}
\setlength{\tabcolsep}{0.8mm}{
\begin{tabular}{c | c | c | c | c | c | c}
\toprule 
             & {\footnotesize MovieLens}  & {\footnotesize Gowalla}   & {\footnotesize Pinterest}  &  {\footnotesize Yelp2018} & {\footnotesize AMZ-Book} & {\footnotesize Dianping}\\
\midrule[0.1pt]
    {\footnotesize |$\mathcal{V}_1$| }  & {6,040}   & {29,858}   & {55,186}   & {31,668}  &{52,643}  &{332,295}  \\ 
    {\footnotesize |$\mathcal{V}_2$| }  & {3,952}   & {40,981}   & {9,916}    & {38,048}  &{91,599}  &{1,362}  \\
\midrule[0.1pt]
    {\footnotesize |$\mathcal{E}$| } & {1,000,209} & {1,027,370} & {1,463,556} & {1,561,406} & {2,984,108} &{10,000,014} \\
   \midrule[0.1pt]
 Density  & {0.04190}   & {0.00084}   & {0.00267}   & {0.00130}  &{0.00062}  &{0.02210}  \\
\bottomrule
\end{tabular}}
\end{table}

\textbf{Baselines.}
\label{sec:baseline}
We include the following representative hashing-based models for (1) general object retrieval (LSH~\cite{lsh}), (2) image search (HashNet~\cite{hashnet}), and (3) Top-N candidate generation for recommendation (Hash\_Gumbel~\cite{gumbel1,gumbel2}, CIGAR~\cite{kang2019candidate} and HashGNN~\cite{hashgnn}).
We also include several state-of-the-art full-precision\footnote{They are denoted by FT32 as we implement them with float32 in the experiments.} recommender models, i.e., NeurCF~\cite{neurcf}, NGCF~\cite{ngcf}, DGCF~\cite{dgcf}, LightGCN~\cite{lightgcn}, for the long-list ranking quality comparison.
Model introductions are referred in Appendix D.
Early hashing methods, e.g., SH~\cite{weiss2008spectral}, RMMH~\cite{joly2011random}, LCH~\cite{zhang2010laplacian}, are excluded mainly because the above competing models~\cite{hashnet,kang2019candidate} have already validated the performance superiority over them.

\begin{table*}[t]
\setlength{\abovecaptionskip}{0.2cm}
\setlength{\belowcaptionskip}{0.2cm}
\centering
\small
  \caption{Results of Recall@20 and NDCG@20 in Top-1000 retrieval: (1) ``R'' and ``N'' denote the Recall and NGCG; (2) the bold indicate \model~and the underline represents the best-performing models; (3) Mark \textbf{$^*$} denotes scenarios where Wilcoxon signed-rank tests indicate statistically significant improvements over the second-best models over 95\% confidence level.}
  \label{tab:topn}
  \setlength{\tabcolsep}{1mm}{
  \begin{tabular}{c|c c| c c| c c| c c| c c|c c} 
    \toprule
    Dataset & \multicolumn{2}{c|}{MovieLens (\%)} & \multicolumn{2}{c|}{Gowalla (\%)} & \multicolumn{2}{c|}{Pinterest (\%)} & \multicolumn{2}{c|}{Yelp2018 (\%)}  & \multicolumn{2}{c|}{AMZ-Book (\%)} & \multicolumn{2}{c}{Dianping (\%)} \\
    Metric & R@20$_{1000}$ & N@20$_{1000}$  & R@20$_{1000}$ & N@20$_{1000}$ & R@20$_{1000}$ & N@20$_{1000}$ & R@20$_{1000}$ & N@20$_{1000}$ & R@20$_{1000}$ & N@20$_{1000}$ & R@20$_{1000}$ & N@20$_{1000}$  \\ 
    \midrule[0.01pt]
    LSH                 & {11.38} & {25.87}  & {8.14} & {12.23}  & {7.88} & {6.71}  & {2.91} & {4.35}  & {2.41} & {2.34}  & {5.85} & {5.84}  \\
    HashNet             & {15.43} & {32.23}  & {11.38} & {13.74}  & {10.27} & {7.33}  & {3.37} & {4.41}  & {2.86} & {2.71}  & {6.24} & {5.59}  \\
    CIGAR               & {14.84} & {31.73}  & {11.57} & {14.21}  & {10.34} & {8.53}  & {3.65} & {4.57}  & {3.05} & {3.03}  & {6.91} & {6.03}  \\
    Hash\_Gumbel            & {16.62}  & {32.48} & {12.26}  & {14.68} & {10.53}  & {8.74} & {3.85}  & {5.12} & {2.69}  & {3.24} & {8.29}  & {6.43} \\
    HashGNN$_{\rm h}$   & {14.21} & {31.83}  & {11.63} & {14.21}  & {10.15} & {8.67}  & {3.77} & {5.04}  & {3.09} & {3.15}  & {8.34} & {6.68}  \\
    HashGNN$_{\rm s}$   & \underline{19.87} & \underline{33.21}  & \underline{13.45} & \underline{14.87}  & \underline{12.38} & \underline{9.11}  & \underline{4.86} & \underline{5.34}  & \underline{3.34} & \underline{3.45}  & \underline{9.57} & \underline{7.13}  \\

    \midrule[0.01pt]
    \textbf{\model} &\textbf{22.86$^*$} &\textbf{36.26$^*$} &\textbf{16.73$^*$} &\textbf{16.48$^*$} &\textbf{12.78$^*$} &\textbf{9.42$^*$} &\textbf{5.51$^*$} &\textbf{5.84$^*$} &\textbf{3.48$^*$} &\textbf{3.92$^*$} &\textbf{10.66$^*$} &\textbf{7.63$^*$}  \\
    \% Gain       &{ 15.05\%} &{ 9.18\%}    &{ 24.39\%} &{ 10.83\%}  &{ 3.23\%} &{ 3.40\%}  &{ 13.37\%} &{ 9.36\%}  &{ 4.19\%} &{ 13.62\%}  &{ 11.39\%} &{ 7.01\%} \\

   \bottomrule
  \end{tabular}}
\end{table*}

\begin{figure*}[t]
\begin{minipage}{1\textwidth}
\vspace{-0.05in}
\includegraphics[width=7.1in]{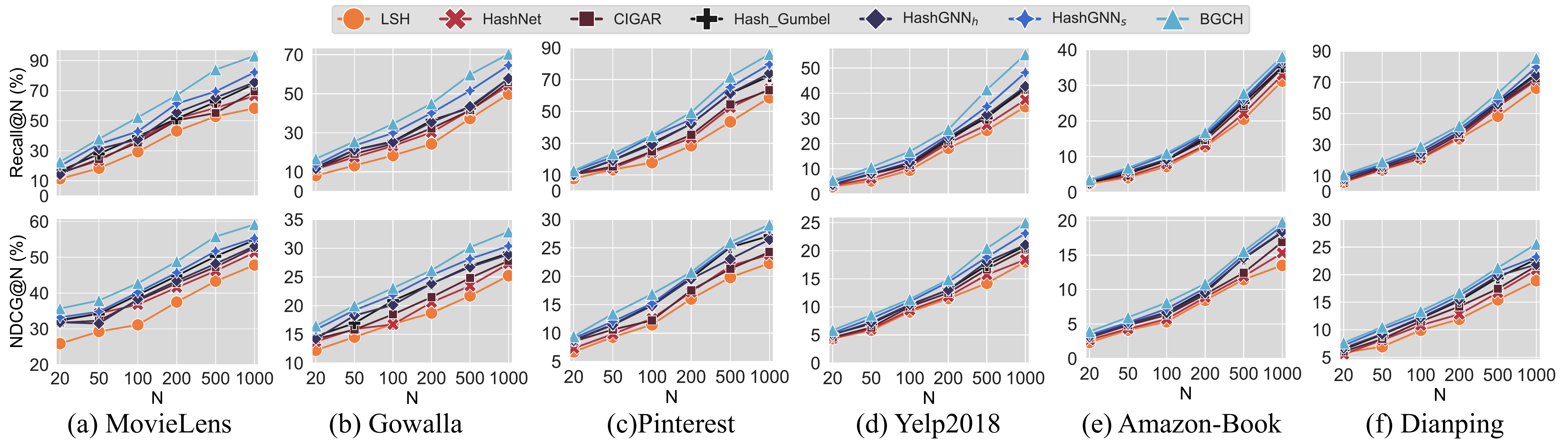}
\end{minipage} 
\vspace{-0.1in}
\caption{Top-N retrieval quality with N in \{20, 50, 100, 200, 500, 1000\} (best view in color).}
\label{fig:topn}
\end{figure*}

\subsection{Top-N Hamming Space Query (RQ1)}
\label{sec:exp_topn}

To evaluate \textbf{fine-to-coarse} Top-N ranking capability, we set N=1000.
We first report the results of Recall@20$_{1000}$ and NDCG@20$_{1000}$\footnote{We then use simple notation Recall@20, NDCG@20 if there is no ambiguity caused.} in Top-1000 search in Table~\ref{tab:topn} and then plot the holistic Recall and NDCG metric curves of \{20, 50, 100, 200, 500, 1000\} of Top-1000 in Figure~\ref{fig:topn}.
We set convolution iteration number as 2 and embedding dimension as 256 for \model~and baselines for fair comparison.
\begin{itemize}[leftmargin=*]

\item \textbf{The results demonstrate the superiority of \model~model over prior hashing-based models.}
(1) As shown in Table~\ref{tab:topn}, the state-of-the-art model, i.e., HashGNN, works better than traditional hashing-based baselines, e.g., LSH, HashNet, CIGAR. 
This indicates that, compared to graph-based models, a direct adaptation of conventional (i.e., non-graph-based) hashing methods may be hard to achieve comparable performance, mainly because of the effectiveness of \textit{graph convolutional} architecture in capturing latent information within the bipartite graph topology for hash encoding preparation.
(2) Owing to our proposed model components, e.g., \textit{adaptive graph convolutional hashing}, \model~consistently outperforms HashGNN over all datasets, by 3.23\%$\sim$24.39\%, and 3.40\%$\sim$ 13.62\% \textit{w.r.t.} Recall@20 and NDCG@20, respectively.
(3) Furthermore, we conduct the Wilcoxon signed-rank tests at \model.
The results verify that all \model~improvements over the second-best model are statistically significant over 95\% confidence level.
(4) To explain these, our proposed topology-aware graph convolutional hashing 
approach effectively enriches the graph node embeddings.
Our proposed feature dispersion further alleviates the feature erosion issue caused by numerical binarization. 
Last but not least, our proposed Fourier serialized gradient estimation is also vital to provide accurate gradients for model optimization.
We conduct the ablation study later in~\cref{sec:ablation}.

\item \textbf{By varying N from 20 to 1000, \model~consistently shows competitive performance compared to the baselines.}
While Recall@N indicates the fraction of relevant objects in Top-N retrieval, NDCG@N measures the ranking capability for relative orders.
As shown in Figure~\ref{fig:topn}:
(1) Compared to the approximated version of HashGNN, i.e., HashGNN$_{s}$, \model~generally obtains stable and significant improvements of both Recall and NDCG metrics over all six benchmarks with N from 20 to 1000.
(2) Apart from the higher retrieval quality, another advantage of \model~over HashGNN$_{s}$ is that it still supports bitwise operations, i.e., hamming distance matching, for inference acceleration. 
This is because, to improve the prediction accuracy, HashGNN$_{s}$ adopts a Bernoulli random variable to provide the probability of replacing the certain digits in the hash codes with the original continuous values, which thus disables the bitwise computation.
As we present in~\cref{sec:resource}, \model~achieves over 8$\times$ inference acceleration over HashGNN$_{s}$, which is particularly promising for query-based online matching and retrieval applications.
\end{itemize}

\subsection{\textbf{Comparing to FT32-based Models (RQ2)}}
\label{sec:exp_full}
\begin{table}[t]
\centering
\scriptsize
\caption{NDCG@1000 results of Float32-based models.}
\vspace{-0.15in}
\label{tab:full}
\setlength{\tabcolsep}{2mm}{
\begin{tabular}{c |c | c | c | c | c| c}
\toprule
    ~            & Movie & Gowalla & Pinterest & Yelp2018 & AMZ-Book & Dianping \\
\midrule
  NeurCF        & {58.76}  & {32.07} & {28.79} & {24.69} & {19.83} & {25.54} 	   \\
  NGCF         & {60.28}  & {32.13} & {29.78} & {25.23} & {20.37} & {25.76} 	   \\
  DGCF         & {62.41}  & {34.97} & \underline{31.47} & {26.28} & {21.74} & {26.87} 	   \\
  LightGCN     & \underline{62.88}  & \underline{35.26} & {31.32} & \underline{26.55} & \underline{21.92} & \underline{27.28} 	   \\
\midrule[0.1pt]
  \model            & \textbf{59.16}  & \textbf{32.87}   & \textbf{29.09}   & \textbf{25.01}   & \textbf{19.79}   & \textbf{25.57}   \\
  \% capacity	&{94.08\%} &{93.22\%} &{92.44\%} &{94.20\%} &{90.28\%} &{93.73\%} \\
\bottomrule
\end{tabular}}
\end{table}

In this section, we also compare \model~with several full-precision (FT32-based) models to evaluate the long-list search quality. 
As we can observe from Table~\ref{tab:full}, we have the following analyses.
(1) We notice that our model \model~generally performs competitively with early full-precision models, e.g., NeurCF and NGCF, over all datasets.
As for the state-of-the-art model LightGCN, our model can generally achieve over 90\% of the Top-1000 ranking capability.
(2) The performance of \model~demonstrates its effectiveness in guaranteeing the long-list Top-N retrieval quality.
This is useful for some industrial applications, e.g., recommender systems, which usually consist of two major stages: \textit{candidate generation} and \textit{re-ranking}.
Thus, obviously, the good quality of candidate generation directly reduces the complexity of next-stage re-ranking, as the search space is substantially pruned.
(3) Considering the \textit{efficiency in Hamming space retrieval} and the \textit{reduced space cost} of those learned hash codes, we believe that \model~can provide the optional alternative to these full-precision models, especially in scenarios with limited computation resources.

 \begin{table*}[t]
\centering
\footnotesize
\caption{Ablation study.}
\vspace{-0.15in}
\label{tab:ablation}
\setlength{\tabcolsep}{0.7mm}{
\begin{tabular}{c |c c|c c|c c|c c|c c|c c}
\toprule
 \multirow{2}*{Variant} & \multicolumn{2}{c|}{MovieLens} & \multicolumn{2}{c|}{Gowalla} & \multicolumn{2}{c|}{Pinterest} & \multicolumn{2}{c|}{Yelp2018}  &\multicolumn{2}{c|}{AMZ-Book} &\multicolumn{2}{c}{Dianping} \\
               ~  & R@20 & N@20 & R@20 & N@20 & R@20 & N@20 & R@20 & N@20 & R@20 & N@20 & R@20 & N@20\\
\midrule
\midrule
 \textsl{w/o FD}   &{22.82} {{\drop \notsotiny (-0.17\%)}} & {35.87} {{\drop \notsotiny (-1.08\%)}}  & {15.92} {{\drop \notsotiny (-4.84\%)}}  & {15.79} {{\drop \notsotiny (-4.19\%)}}  & {12.25} {{\drop \notsotiny (-4.15\%)}}  & {9.07} {{\drop \notsotiny (-3.72\%)}}  & {5.16} {{\drop \notsotiny (-6.35\%)}} & {5.49} {{\drop \notsotiny (-2.49\%)}}   &  {3.26} {{\drop \notsotiny (-6.32\%)}} & {3.57} {{\drop \notsotiny (-8.93\%)}}   &  {10.46} {{\drop \notsotiny (-1.88\%)}} &  {7.50} {{\drop \notsotiny (-1.70\%)}}    \\

 \textsl{w/o AH-TA}    &{19.54}{{\drop \notsotiny (-14.52\%)}} &{29.17}{{\drop \notsotiny (-19.55\%)}} &{13.49}{{\drop \notsotiny (-19.37\%)}} &{12.38}{{\drop \notsotiny (-24.88\%)}} &{12.24} {{\drop \notsotiny (-4.23\%)}} &{8.86} {{\drop \notsotiny (-5.94\%)}} &{4.77}{{\drop \notsotiny (-13.43\%)}} &{5.18}{{\drop \notsotiny (-11.30\%)}} &{2.49}{{\drop \notsotiny (-28.45\%)}} &{2.86}{{\drop \notsotiny (-27.04\%)}} &{\ \ 9.83} {{\drop \notsotiny (-7.79\%)}} &{6.87} {{\drop \notsotiny (-9.96\%)}}  \\

\textsl{w/o AH-RF}   &{16.73}{{\drop \notsotiny (-26.82\%)}} &{26.97}{{\drop \notsotiny (-25.62\%)}} &{11.24}{{\drop \notsotiny (-32.82\%)}} &{11.29}{{\drop \notsotiny (-31.49\%)}} &{10.18}{{\drop \notsotiny (-20.34\%)}} &{7.33}{{\drop \notsotiny (-22.19\%)}} &{3.76}{{\drop \notsotiny (-31.76\%)}} &{4.30}{{\drop \notsotiny (-26.37\%)}} &{3.27} {{\drop \notsotiny (-6.03\%)}} &{3.64} {{\drop \notsotiny (-7.14\%)}} &{\, \ 8.33}{{\drop \notsotiny (-21.86\%)}} &{6.93} {{\drop \notsotiny (-9.17\%)}}  \\
\midrule[0.1pt]

 \textsl{w/in LF}  &{21.06} {{\drop \notsotiny (-7.87\%)}} &{34.59} {{\drop \notsotiny (-4.61\%)}} &{15.48} {{\drop \notsotiny (-7.47\%)}} &{15.38} {{\drop \notsotiny (-6.67\%)}} &{11.94} {{\drop \notsotiny (-6.57\%)}} &{8.89} {{\drop \notsotiny (-5.63\%)}} &{4.86}{{\drop \notsotiny (-11.80\%)}} &{5.17}{{\drop \notsotiny (-11.47\%)}} &{3.14} {{\drop \notsotiny (-9.77\%)}} &{3.62} {{\drop \notsotiny (-7.65\%)}} &{\, \ 9.40}{{\drop \notsotiny (-11.82\%)}} &{7.27} {{\drop \notsotiny (-4.72\%)}}  \\

 \textsl{w/o $\mathcal{L}_{bpr}$} &{21.42} {{\drop \notsotiny (-6.30\%)}} &{34.83} {{\drop \notsotiny (-3.94\%)}} &{15.87} {{\drop \notsotiny (-5.14\%)}} &{15.66} {{\drop \notsotiny (-4.98\%)}} &{12.33} {{\drop \notsotiny (-3.52\%)}} &{9.17} {{\drop \notsotiny (-2.65\%)}} &{5.31} {{\drop \notsotiny (-3.63\%)}} &{5.61} {{\drop \notsotiny (-3.94\%)}} &{3.35} {{\drop \notsotiny (-3.74\%)}} &{3.77} {{\drop \notsotiny (-3.83\%)}} &{10.21} {{\drop \notsotiny (-4.22\%)}} &{7.38} {{\drop \notsotiny (-3.28\%)}}  \\

 \textsl{w/o $\mathcal{L}_{rec}$}  &{17.01}{{\drop \notsotiny (-25.59\%)}} &{27.16}{{\drop \notsotiny (-25.10\%)}} &{12.27}{{\drop \notsotiny (-26.66\%)}} &{12.63}{{\drop \notsotiny (-23.36\%)}} &{10.81}{{\drop \notsotiny (-15.41\%)}} &{7.86}{{\drop \notsotiny (-16.56\%)}} &{3.93}{{\drop \notsotiny (-28.68\%)}} &{4.37}{{\drop \notsotiny (-25.17\%)}} &{3.19} {{\drop \notsotiny (-8.33\%)}} &{3.73} {{\drop \notsotiny (-4.85\%)}} &{\, \ 8.82}{{\drop \notsotiny (-17.26\%)}} &{7.26} {{\drop \notsotiny (-4.85\%)}}  \\

\midrule[0.1pt]
  \textbf{\model}   &\textbf{22.86}& \textbf{36.26}  &\textbf{16.73}& \textbf{16.48} &\textbf{12.78}& \textbf{9.42} &\textbf{5.51} & \textbf{5.84} &\textbf{3.48} & \textbf{3.92} &\textbf{10.66}& \textbf{7.63}    \\ 

\bottomrule
\end{tabular}}
\end{table*}

\subsection{\textbf{Ablation Study (RQ3)}}
\label{sec:ablation}
We evaluate the necessity of model components with Top-20 search metrics and report the results in Table~\ref{tab:ablation}.

{\textbf{Effect of Feature Dispersion.}}
We first analyze the effect of our proposed feature dispersion approach for hedging the feature erosion in hash encoding.
We introduce the model variant, denoted by \textsl{w/o FD}, to directly disable it by setting $\eta$ as 0.
As shown in Table~\ref{tab:ablation}, the performance gap between \textsl{w/o FD} and \model~ well demonstrates the effectiveness of dispersing the latent features before embedding binarization for hashing over these six datasets. 
Moreover, let the density summarized in Table~\ref{tab:datasets} be computed by $\frac{|\mathcal{V}_1|\times |\mathcal{V}_2|}{|\mathcal{E}|}$. 
In sparse datasets, i.e., Gowalla (0.00084), Pinterest (0.00267), Yelp2018 (0.00130), and AMZ-Book (0.00062), the performance decay between \model~and \textsl{w/o FD} is much larger than on the other two datasets, i.e., MovieLens (0.04190) and Dianping (0.02210).
This is because sparse datasets are more sensitive to hashing as they may not have insufficient training edges to abridge the gap against their unhashed version.
Another promising approach to tackle data sparsity issue is \textit{data augmentation}~\cite{zhang2022costa} and we leave it for future work.

{\textbf{Effect of Adaptive Graph Convolutional Hashing.}}
Then we study this model component by setting two variants, where: (1) \textsl{w/o AH-TA} only disables the \textit{topology-awareness of hashing} and sets it as the final encoder after all graph convolutions (just like the conventional manner~\cite{hashgnn,hashnet}); (2) \textsl{w/o AH-RF} removes the \textit{rescaling factors}.
From Table~\ref{tab:ablation} results, we have the following observations:
\begin{enumerate}[leftmargin=*]
\item 
The variant \textsl{w/o AH-TA} consistently underperforms \model.
This demonstrates that simply using the rear output embeddings from the GCN framework may not sufficiently model the unique latent node features for hashing, especially for the rich structural information within different graph depths.
While in \model, by capturing the intermediate information for representation enrichment, the topology-aware hashing can effectively alleviate the limited expressivity of discrete hash codes.

\item Apart from the topology-aware hashing, another key point for contributing to the performance improvement is the \textit{rescaling factor} that we introduced in Equation~(\ref{eq:rescale}).
After removing it from \model, variant \textsl{w/o AH-RF} presents huge performance decay.
Although these factors are directly calculated and may not be theoretically optimal, they reflect the numerical uniqueness of embeddings for later hash encoding, which substantially improves \model's prediction capability. 
We study the \textit{determinacy} design of factor computation in the following section.
\end{enumerate}

{\textbf{Design of Learnable Rescaling.}}
We include another variant namely \textsl{w/in LF} to indicate the model version using \textit{learnable rescaling factors}. 
As shown in Table~\ref{tab:ablation}, the design of learnable rescaling factors in \textsl{w/in LF} does not achieve good performance as expected. 
One explanation is that, our proposed model currently does not post a strong mathematical constraint to the learnable factors ($\alpha_x$), e.g., $\alpha_x^{(l)} = \argmin(\widetilde{\emb{V}}^{(l)}_x$, $\alpha_x^{(l)} \emb{{Q}}^{(l)}_x)$, mainly because of its additional training complexity; and purely relying on the stochastic optimization, e.g., stochastic gradient descent (SGD), may hardly reach the optimum.
Considering the additional search space introduced from this regularization design, we argue that our deterministic rescaling method is simple yet effective in practice.

{\textbf{Effect of Multi-loss in Optimization.}}
Lastly, to study the effect of BPR loss $\mathcal{L}_{bpr}$ and graph reconstruction loss $\mathcal{L}_{rec}$, we set two variants, termed by \textsl{w/o $\mathcal{L}_{bpr}$} and \textsl{w/o $\mathcal{L}_{rec}$}, to optimize \model~separately.
As shown in Table~\ref{tab:ablation}, with all other model components, partially using each one of $\mathcal{L}_{bpr}$ and $\mathcal{L}_{rec}$ can not achieve the expected performance.
This confirms the effectiveness of our proposed multi-loss design:
while $\mathcal{L}_{bpr}$ learns to assign higher prediction values to observed edges, i.e., $\emb{Y}_{x,y}=1$, than the unobserved node pair counterparts, 
$\mathcal{L}_{rec}$ transfers the graph reconstruction problem to a classification task by using the original embeddings in training.
By collectively optimizing these two loss functions, our model \model~can learn precise intermediate embeddings from $\mathcal{L}_{rec}$, and generate targeted hash codes with high-quality relative order information regularized by $\mathcal{L}_{bpr}$ accordingly.

\subsection{\textbf{Resource Consumption Analysis (RQ4)}}
\label{sec:resource}
Due to the various value ranges over all six datasets, we compactly report the value ratios of \model~over the state-of-the-art hashing-based model HashGNN$_s$ in Figure~\ref{fig:tradeoff}.

\begin{figure}[h]
\begin{minipage}{0.5\textwidth}
\includegraphics[width=3.4in]{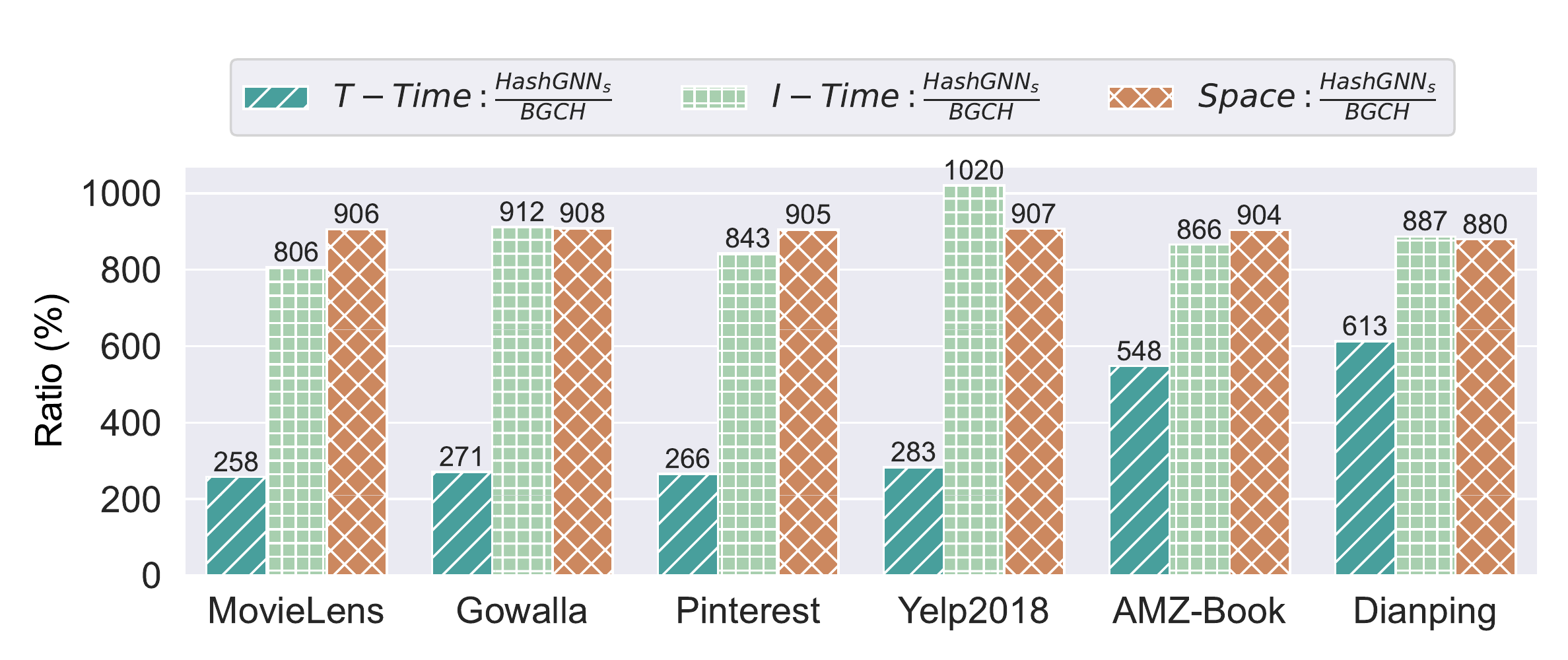}
\end{minipage} 
\vspace{-0.15in}
\caption{Resource consumption ratios.}
\label{fig:tradeoff}
\end{figure}

{\textbf{Model Training Time Cost.}}
As indicated by the metric ``\textit{T-Time}'' in Figure~\ref{fig:tradeoff}, we notice that training HashGNN$_s$ is more time-consuming than our proposed model.
The main reason is that HashGNN adopts the early GCN framework~\cite{graphsage} as the model backbone, while our model follows the latest framework~\cite{lightgcn} to remove operations, e.g., self-connection, feature transformation, and nonlinear activation.
In addition, on the two largest datasets AMZ-Book and Dianping, the training cost ratio further increases to around 5$\sim$6 times.
This is because we have to decrease the batch size of HashGNN$_s$ for tractable training process.

{\textbf{Online Inference Time Cost.}}
We randomly generate 1,000 queries and evaluate the computation time cost.
To present a fair comparison, we disable all parallel arithmetic techniques (e.g., MKL, BLAS) by using the open-source toolkit\footnote{\url{https://www.lfd.uci.edu/~gohlke/pythonlibs/}}.
Indicated by ``\textit{I-Time}'' in Figure~\ref{fig:tradeoff}, our model with Hamming distance matching generally achieves over 8$\times$ computation acceleration over HashGNN$_s$ on all datasets.
This is because, as we have explained in~\cref{sec:exp_topn}, HashGNN$_s$ randomly replaces the hash codes with their original continuous embeddings for relaxation and adopts floating-point arithmetics to pursue performance improvement while sacrificing the computation acceleration from the bitwise operations.

{\textbf{Hash Codes Memory Footprint.}}
Binarized embeddings can largely reduce memory space consumption.
Compared to the state-of-the-art hashing-based model HashGNN$_s$, our \model~further achieves about 9$\times$ of memory space reduction for the hash codes.
As we have just explained, since HashGNN$_s$ interprets hash codes with random real-value digits, it thus requires additional cost to distinguish binary digits from full-precision ones. 
On the contrary, \model~ separates the storage of binarized encoding parts and corresponding rescaling factors, thus providing the advantage for space overhead optimization.

\subsection{\textbf{Study of Fourier Gradient Estimation (RQ5)}}
\label{sec:fs_exp}
We take our largest dataset Dianping for illustration and the analysis can be generally popularized to the other datasets. 

{\textbf{Effect of Decomposition Term $n$.}}
We vary the decomposition term $n$ from 1 to 16.
As shown in Figure~\ref{fig:fs_n}, we have two observations:
(1) Different decomposition terms will surely affect the final retrieval quality, as theoretically, the larger $n$ increases, the more accurate gradients can be estimated.
However, in practice, too large values of $n$ may introduce the overfitting risk, which implies that keeping a moderate $n$, e.g., $n$=4 in Figure~\ref{fig:fs_n}(a), can already maximize the model performance.
(2) By varying $n$ from 1 to 16, the training time per iteration of \model~ slowly increases. 
This generally coincides with our complexity analysis in Appendix C, in which the majority of training cost lies in our feature dispersion and graph convolutional hashing, as $O(\frac{2cs(K+L)|\mathcal{E}|^2}{B}) \gg O(snd|\mathcal{E}|)$.

{\textbf{Comparison with Other Gradient Estimators.}}
We include several recent gradient estimators, i.e., \textit{Tanh-like}~\cite{qin2020forward,gong2019differentiable}, \textit{SignSwish}~\cite{darabi2018bnn}, \textit{Sigmoid}~\cite{sigmoid}, and \textit{projected-based estimator}~\cite{RBCN} (denoted as PBE).
(1) The results summarized in Table~\ref{tab:estimator} well demonstrate the superiority of our proposed Fourier Series decomposition to $\sign(\cdot)$ function in gradient estimation.
As we have briefly explained, most existing estimators employ the \textit{visually similar} function approximation to $\sign(\cdot)$; compared to STE, they generally provide better gradient estimation.
(2) However, for those bipartite graphs with heavy sparsity, e.g., Gowalla (0.00084) and AMZ-Book (0.00062), graph-based models may hardly collect enough structural information for effective hash codes training.
Based on the limited training samples, these \textit{theoretically irrelevant} estimators may not effectively rectify the optimization deviation, and thus present a recognizable performance gap against our proposed Fourier serialized estimator.

\begin{figure}[t]
\begin{minipage}{0.5\textwidth}
\includegraphics[width=3.3in]{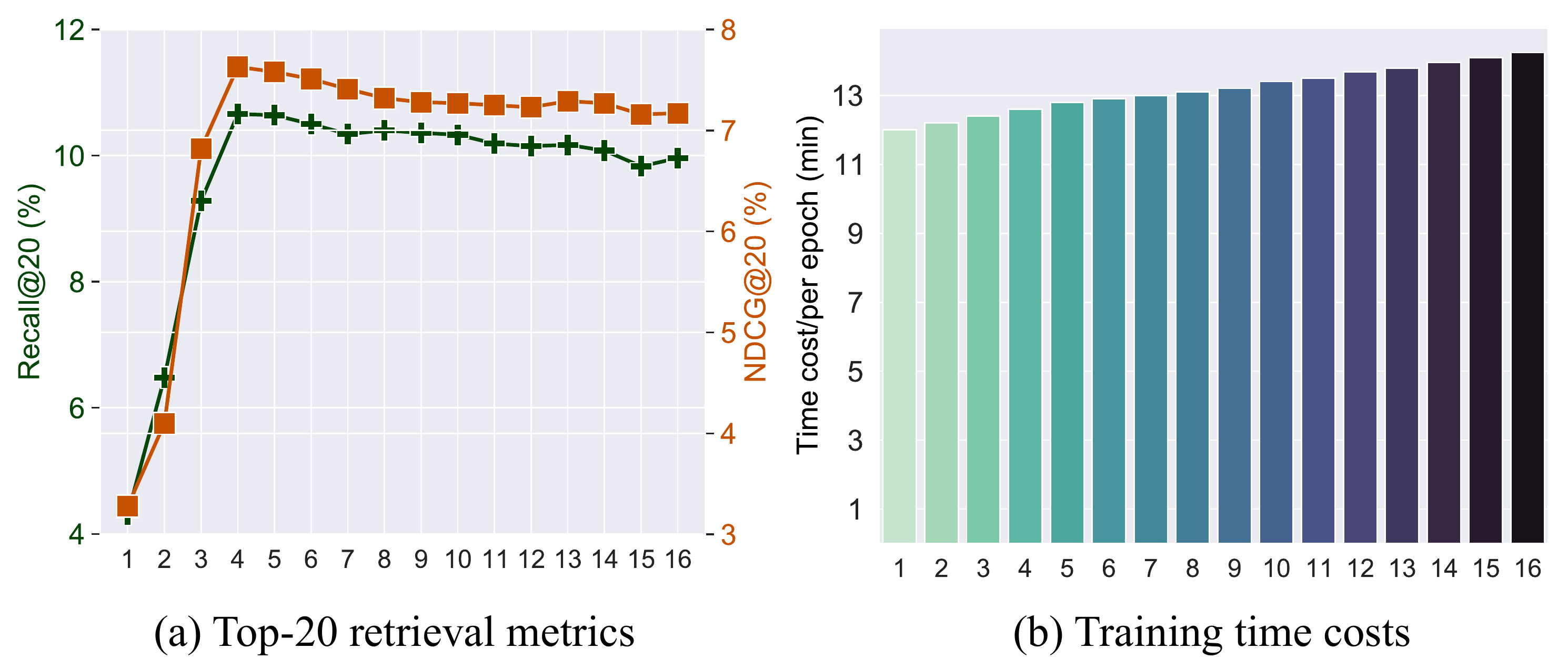}
\end{minipage} 
\vspace{-0.1in}
\caption{Fourier Series decomposition term $n$ in \model.}
\label{fig:fs_n}
\end{figure}

\begin{table}[t]
\centering
\scriptsize
\caption{Gradient estimator comparison on Recall@20.}
\vspace{-0.1in}
\label{tab:estimator}
\setlength{\tabcolsep}{0.7mm}{
\begin{tabular}{c |c | c | c | c | c| c}
\toprule
    ~            & Movie & Gowalla & Pinterest & Yelp2018 & AMZ-Book & Dianping \\
\midrule
  STE         & {20.93 ({\drop \tiny -8.44\%})}   & {14.85({\drop \tiny -11.24\%})}   & {12.35 ({\drop \tiny -3.36\%})}   & {5.24 ({\drop \tiny -4.90\%})}   & {3.12({\drop \tiny -10.34\%})}   & {10.34 ({\drop \tiny -3.00\%})}  \\
  Tanh         & {21.75 ({\drop \tiny -4.86\%})}   & {15.06 ({\drop \tiny -9.98\%})}   & {12.36 ({\drop \tiny -3.29\%})} & {5.43 ({\drop \tiny -1.45\%})}   & {3.21 ({\drop \tiny -7.76\%})}     & {10.41 ({\drop \tiny -2.34\%})}  \\
  SignSwish         & {22.13 ({\drop \tiny -3.19\%})}   & {15.62 ({\drop \tiny -6.63\%})}   & {12.44 ({\drop \tiny -2.66\%})}   & {5.50 ({\drop \tiny -0.18\%})}   & {3.34 ({\drop \tiny -4.02\%})}   & {10.43 ({\drop \tiny -2.16\%})}  \\
  Sigmoid         & {22.08 ({\drop \tiny -3.41\%})}   & {15.21 ({\drop \tiny -9.09\%})}   & {12.52 ({\drop \tiny -2.03\%})}   & {5.53 ({\drop \tiny +0.03\%})}   & {3.18 ({\drop \tiny -8.62\%})}   & {10.38 ({\drop \tiny -2.63\%})}  \\
  PBE         & {21.68 ({\drop \tiny -5.16\%})}   & {15.05({\drop \tiny -10.04\%})}   & {12.32 ({\drop \tiny -3.60\%})}   & {5.35 ({\drop \tiny -2.90\%})}   & {3.13({\drop \tiny -10.06\%})}   & {10.47 ({\drop \tiny -1.78\%})}  \\
\midrule[0.1pt]
  \model            & \textbf{22.86}  & \textbf{16.73}   & \textbf{12.78}   & \textbf{5.51}   & \textbf{3.48}   & \textbf{10.66}   \\
\bottomrule
\end{tabular}}
\end{table}

\section{\textbf{Conclusion}}
\label{sec:con}
We study the graph convolutional hashing over bipartite graphs for efficient Hamming space search, by proposing \model~with three effectual modules.
Extensive experiments demonstrate the model superiority over conventional counterparts and validate the effectiveness of all proposed modules. 
As for future work, we plan to investigate modeling with the \textit{semi-supervised} graph setting~\cite{zixing1,zixing2} mainly for its commonality in practice. 
Moreover, another promising direction is to upgrade \model~ for \textit{inductive learning}~\cite{graphsage}, such that it can make adaptive matching and prediction for evolving graphs with structural updates.


\begin{acks}
We thank anonymous reviewers for their insightful comments and suggestions.
Yankai Chen, Yifei Zhang and Irwin King were supported by the National Key Research and Development Program of China (No. 2018AAA0100204) and by the Research Grants Council of the Hong Kong Special Administrative Region, China (CUHK 2410021, Research Impact Fund, No. R5034-18).
Yixiang Fang was supported by NSFC Grant (62102341).
\end{acks}

\clearpage

\bibliographystyle{ACM-Reference-Format}
\bibliography{ref}

\clearpage
\appendix

\section{Loss Landscape Visualization}
\label{sec:visualization}
We simulate the optimization trajectories of learnable embeddings and visually compare the loss landscapes of non-hashing and hashing versions in Figure~\ref{fig:model}(a).
Specifically, we manually assign perturbations~\cite{nahshan2021loss, bai2020binarybert} to the embeddings on MovieLens dataset as: {\footnotesize$\emb{V}_x^{(l)} = \emb{V}_x^{(l)} \pm p \cdot$ $\overline{|{\emb{V}_x^{(l)}|}}$ $\cdot \emb{1}^{(l)}$}.
where {\footnotesize$\overline{|{\emb{V}_x^{(l)}|}}$} represents the absolute mean of entries in {\footnotesize$\emb{V}_x^{(l)}$} and perturbation magnitudes $p$ are from $\{0.01, \cdots, 0.50\}$. $\emb{1}$ is an all-one vector. 
For pairs of perturbed node embeddings, we plot their loss distribution accordingly.
As we can observe, the non-hashing version produces a flat loss surface, showing the local convexity.
On the contrary, the hashing counterpart has a bumping and complex loss landscape.

\section{Notation Table and \model~Pseudo-codes}
\label{app:notation_and_code}
We use bold uppercase and calligraphy characters for matrices and sets. The non-bolded denote graph nodes or scalars. 
Key notations and Pseudocodes are explained in Table~\ref{tab:notation} and Algorithm~\ref{alg:model}.

\begin{table}[t]
\caption {Notations and meanings.}
\vspace{-0.15in}
\label{tab:notation}
  \footnotesize
  \begin{tabular}{c|l} 
     \hline
          {\bf Notation} & {\bf Meaning}\\
     \hline\hline
          {\notsotiny $\mathcal{G},\mathcal{V}_1$, $\mathcal{V}_2$, $\mathcal{E}$} & Bipartite graph with sets of nodes and edges.\\
    \hline
        {$c$, $d$}  & Convolution dimension and hash code dimension.\\
    \hline
        {$\emb{V}_x^{(l)}$}  & {Node $x$'s embedding at iteration $l$.} \\
    \hline
         \tabincell{l}{$\emb{Y}$}   & \tabincell{l}{{Edge transactions where $\emb{Y}_{x,y}=1$ indicates the interaction} \\ existence between nodes $x$ and $y$, and otherwise $\emb{Y}_{x,y}=0$.} \\
    \hline
        {$\emb{A}$}, $\emb{D}$    & {Adjacency matrix and associated diagonal degree matrix.}  \\
    \hline
        $\emb{p}^{(k)}$, $\emb{P}$ & Dispersing vector at iteration $k$ and the projection matrix.\\
    \hline
        $\widetilde{V}$ & Feature-dispersed embedding matrix. \\
    \hline
        {$\widehat{\emb{Y}}$} & {Estimated matching scores.} \\
    \hline
        {$\emb{Q}_x^{(l)}$}   &  {Hash code segment of node $x$ at iteration $l$.} \\
    \hline
        $\alpha^{(l)}$  & $x$'s rescaling factor computed at the $l$-th convolution.\\
    \hline
        {$\emb{Q}_x$}  &   {Target hash codes of node $x$.} \\
    \hline
        {$L$}, {$K$}  &  {Numbers of convolutional hashing and dispersion generation.}\\
    \hline
        $\mathcal{L}_{rec}$, $\mathcal{L}_{bpr}$, $\mathcal{L}$ & {Two loss terms of final objective function $\mathcal{L}$.} \\
    \hline
      {$\eta$, $H$, $n$, $\lambda_1$, $\lambda_2$}  & hyper-parameters.\\
    \hline
  \end{tabular}
\end{table}

\begin{algorithm}[t]
\small
\caption{\model~algorithm.}
\label{alg:model}
\LinesNumbered  
\While{\rm{model not converge}}{
	\For{$k = 0, \cdots, K-1$}{
		$\emb{p}^{(k+1)}$ $\gets$ $(\emb{V}^{(0)})^\mathsf{T}\emb{V}^{(0)}\emb{p}^{(k)}$; \\
	}
	$\emb{P} \gets$ obtain the projection matrix \Comment*[r]{Eq.(\ref{eq:projection})} 
	$\widetilde{\emb{V}}^{(0)} \gets$ obtain the feature-dispersed embeddings \Comment*[r]{Eq.(\ref{eq:disperse})}
    \For{$l = 0, \cdots, L-1$}{
          $\widetilde{\emb{V}}^{(l+1)} \gets(\emb{D}^{-\frac{1}{2}} \emb{A} \emb{D}^{-\frac{1}{2}} )\widetilde{\emb{V}}^{(l)}$ \Comment*[r]{Eq.(\ref{eq:fdconv})}
          $\emb{{Q}}^{(l+1)} \gets \sign\big(\widetilde{\emb{V}}^{(l+1)})$ \Comment*[r]{Eq.(\ref{eq:hashing})}
          $\emb{\alpha} \gets$ calculate the rescaling factors \Comment*[r]{Eq.(\ref{eq:rescale})}
        }
      \For{$x \in \mathcal{V}_1, y \in \mathcal{N}(x)$}{
      $\widehat{\emb{Y}}_{x,y} \gets$ $\alpha_x\alpha_y$ $(d - 2D_{H}(\emb{Q}_x, \emb{Q}_y))$ \Comment*[r]{Eq.(\ref{eq:inner_score})\&Thm.\ref{tm:equal}}
     }
      $\mathcal{L} \gets$ compute loss and optimize the model \Comment*[r]{Eq's.(\ref{eq:rec}-\ref{eq:L})} 
      
}
\textbf{Function} \tt{Gradient\_estimator}($\mathcal{L}$): \\
$\frac{\partial \mathcal{L}}{\partial \emb{V}} \gets \frac{\partial \mathcal{L}}{\partial \emb{Q}} \cdot \frac{4}{H} \sum_{i=1,3,5,\cdots}^{n} \cos(\frac{\pi i \emb{V}}{H})$ \Comment*[r]{Eq.(\ref{eq:gradient})}
\end{algorithm}

\section{Theoretical Proofs and Analyses}
\label{sec:discuss}
\addtocounter{thm}{-2}

\begin{thm}[\textbf{Feature Dispersion}]
Let ${\emb{V}}^{(l)} = \emb{U}_1\emb{\Sigma}\emb{U}_2^\mathsf{T}$, where $\emb{U}_1$ and $\emb{U}_2$ are unitary matrices and descending singular value matrix $\emb{\Sigma} = \diag(\sigma_1, \sigma_2, \cdots, \sigma_c)$.  
Then $\mathbb{E}({\small\widetilde{\emb{V}}^{(l)}}) = \emb{U}_1\emb{\Sigma}\emb{\Sigma}_{\mu}\emb{U}_2^\mathsf{T}$ where $\emb{\Sigma}_{\mu} = \diag(\mu_1, \mu_2, \cdots, \mu_c)_{0<\mu_{1 \cdots c}<1}$ is in ascending order.
\end{thm}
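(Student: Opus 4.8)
The plan is to reduce the whole statement to the eigenstructure of $(\emb{V}^{(l)})^\mathsf{T}\emb{V}^{(l)}$ and then to take the expectation over the random seed $\emb{p}^{(0)}$. First I would record that, by Equation~(\ref{eq:disperse}) together with the fact that the convolution operator $\emb{D}^{-1/2}\emb{A}\emb{D}^{-1/2}$ acts by left multiplication while the projection $\emb{P}$ acts by right multiplication, the dispersed embedding satisfies $\widetilde{\emb{V}}^{(l)} = \emb{V}^{(l)}(\emb{I} - \epsilon\emb{P})$. Since $\emb{P}$ is the only random object, this gives $\mathbb{E}(\widetilde{\emb{V}}^{(l)}) = \emb{V}^{(l)}(\emb{I} - \epsilon\,\mathbb{E}(\emb{P}))$, so the entire problem collapses to computing $\mathbb{E}(\emb{P})$.

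Next, using the SVD $\emb{V}^{(l)} = \emb{U}_1\emb{\Sigma}\emb{U}_2^\mathsf{T}$ I would write $(\emb{V}^{(l)})^\mathsf{T}\emb{V}^{(l)} = \emb{U}_2\emb{\Sigma}^2\emb{U}_2^\mathsf{T}$, so that the power iteration (with $\emb{V}=\emb{V}^{(l)}$ the matrix under consideration) yields $\emb{p}^{(K)} = \emb{U}_2\emb{\Sigma}^{2K}\emb{U}_2^\mathsf{T}\emb{p}^{(0)}$. Expanding the seed in the orthonormal basis $\emb{U}_2$ as $\emb{p}^{(0)} = \sum_i a_i\emb{u}_{2,i}$, the rotational invariance of the standard normal makes the coordinates $a_i = \emb{u}_{2,i}^\mathsf{T}\emb{p}^{(0)}$ i.i.d.\ $\mathcal{N}(0,1)$, and $\emb{p}^{(K)} = \sum_i a_i\sigma_i^{2K}\emb{u}_{2,i}$. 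Substituting into $\emb{P} = \emb{p}^{(K)}(\emb{p}^{(K)})^\mathsf{T}/\|\emb{p}^{(K)}\|_2^2$ then expresses $\emb{P}$ in the $\emb{U}_2$ basis with entries $a_i a_j \sigma_i^{2K}\sigma_j^{2K}\big/\sum_m a_m^2\sigma_m^{4K}$.

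The crux is then to show that $\mathbb{E}(\emb{P}) = \emb{U}_2\emb{\Lambda}\emb{U}_2^\mathsf{T}$ is diagonal with the correct ordering. The off-diagonal entries vanish in expectation by a sign-flip symmetry: negating a single coordinate $a_i$ preserves the joint law and the denominator but flips the $i\neq j$ numerator, forcing its expectation to $0$. The diagonal entries are $\lambda_i = \mathbb{E}\big[a_i^2\sigma_i^{4K}\big/\sum_m a_m^2\sigma_m^{4K}\big]$, which manifestly lie in $(0,1)$ and sum to $1$. To obtain that they are \emph{descending} (so that $\mu_i = 1-\epsilon\lambda_i$ comes out ascending) I would symmetrize over swapping the i.i.d.\ pair $(a_i, a_{i+1})$: conditioning on the remaining coordinates and on the unordered value pair, the averaged contribution to $\lambda_i - \lambda_{i+1}$ factors as a nonnegative multiple of $(\sigma_i^{4K} - \sigma_{i+1}^{4K})$, which is $\ge 0$ because $\emb{\Sigma}$ is in descending order. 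This swap-and-factor computation is the one genuinely delicate step, and I expect it to be the main obstacle, since the naive difference of the two ratios is not termwise signed.

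Finally I would assemble the pieces: $\mathbb{E}(\widetilde{\emb{V}}^{(l)}) = \emb{U}_1\emb{\Sigma}\emb{U}_2^\mathsf{T}(\emb{I}-\epsilon\emb{U}_2\emb{\Lambda}\emb{U}_2^\mathsf{T}) = \emb{U}_1\emb{\Sigma}(\emb{I}-\epsilon\emb{\Lambda})\emb{U}_2^\mathsf{T}$, so that $\emb{\Sigma}_{\mu} = \emb{I}-\epsilon\emb{\Lambda} = \diag(1-\epsilon\lambda_1,\dots,1-\epsilon\lambda_c)$. Since each $\lambda_i\in(0,1)$ and $\epsilon\in(0,1)$, every $\mu_i\in(1-\epsilon,1)\subset(0,1)$, and the descending order of the $\lambda_i$ makes the $\mu_i$ ascending, which is exactly the claimed form.
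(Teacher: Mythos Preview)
Your plan is correct and follows essentially the same route as the paper: pass to the SVD basis of the embedding matrix, use rotational invariance of the Gaussian seed to get i.i.d.\ coordinates, kill the off-diagonal terms of $\mathbb{E}(\emb{P})$ by symmetry, and then deduce the ascending order of $\mu_i$ from the exchangeability of the coordinate pair. Your sign-flip and swap-and-condition arguments are in fact more careful than the paper's, which somewhat informally pulls the expectation through the ratio; the only minor discrepancy is that in the paper's construction the power iteration is run on $\emb{V}^{(0)}$ rather than $\emb{V}^{(l)}$, but the paper's own proof also treats only the $l=0$ case and hand-waves the extension, so this is not a gap peculiar to your proposal.
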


\begin{proof}
We focus on comparing ({\small$\widetilde{\emb{V}}^{(0)}$, ${\emb{V}}^{(0)}$}), which can be easily popularized to any convolution layer, i.e., ({\small$\widetilde{\emb{V}}^{(l)}$, ${\emb{V}}^{(l)}$}). 
Conducting SVD decomposition on {\small${\emb{V}}^{(0)}$}, we have ${\emb{V}}^{(0)} = \emb{U}_1\emb{\Sigma}\emb{U}_2^\mathsf{T}$, where {\small$\emb{U}_1$} and {\small$\emb{U}_2$} are unitary matrices of singular vectors.
Then following {\small$\emb{p}^{(k)} = {\emb{V}^{(0)}}^\mathsf{T}\emb{V}^{(0)}\emb{p}^{(k-1)}$}, we shall have {\small $\emb{p}^{(k)} = ({\emb{V}^{(0)}}^\mathsf{T}\emb{V}^{(0)})^k\emb{p}^{(0)}$}.
Replacing {\small${\emb{V}}^{(0)}$} with its SVD decomposition, we get the following equation:
\begin{sequation}
\emb{p}^{(k)} = (\emb{U}_2\emb{\Sigma}^{2k}\emb{U}_2^\mathsf{T})\emb{p}^{(0)}.
\end{sequation}%
Then we transform the projection matrix in Equation~(\ref{eq:projection}) as follows:
\begin{sequation}
\begin{aligned}
\emb{P} = \frac{\emb{p}^{(k)} {\emb{p}^{(k)}}^\mathsf{T}}{{\emb{p}^{(k)}}^\mathsf{T} \emb{p}^{(k)}} & = \frac{(\emb{U}_2\emb{\Sigma}^{2k}\emb{U}_2^\mathsf{T})\emb{p}^{(0)} {\emb{p}^{(0)}}^\mathsf{T} (\emb{U}_2\emb{\Sigma}^{2k}\emb{U}_2^\mathsf{T})}
{{\emb{p}^{(0)}}^\mathsf{T} (\emb{U}_2\emb{\Sigma}^{2k}\emb{U}_2^\mathsf{T}) (\emb{U}_2\emb{\Sigma}^{2k}\emb{U}_2^\mathsf{T})\emb{p}^{(0)} } \\
 & = \emb{U}_2\emb{\Sigma}^{2k} \frac{\emb{U}_2^\mathsf{T}\emb{p}^{(0)} {\emb{p}^{(0)}}^\mathsf{T} \emb{U}_2} 
 {{\emb{p}^{(0)}}^\mathsf{T} \emb{U}_2\emb{\Sigma}^{4k}\emb{U}_2^\mathsf{T}\emb{p}^{(0)} }\emb{\Sigma}^{2k}\emb{U}_2^\mathsf{T}.
\end{aligned}
\end{sequation}%
Let $\emb{t} = \emb{U}_2^\mathsf{T}\emb{p}^{(0)}$, we can further simplify the above equation to:
\begin{sequation}
\emb{P} = \emb{U}_2 \emb{\Sigma}^{2k} \frac{ \emb{t} \emb{t}^\mathsf{T}}{\emb{t}^\mathsf{T}\emb{\Sigma}^{4k} \emb{t} }\emb{\Sigma}^{2k} \emb{U}_2^\mathsf{T}, 
 \text{ where scalar }  \emb{t}^\mathsf{T}\emb{\Sigma}^{4k} \emb{t} = \sum_{j=1}^c t_j^2 \sigma_j^{4k}.
\end{sequation}%
Recalling that {\small $\widetilde{\emb{V}}^{(0)} = \emb{V}^{(0)}(\emb{I} - \epsilon \emb{P})$}, {\small $\mathbb{E}(\widetilde{\emb{V}}^{(0)}) = \emb{V}^{(0)} - \epsilon\cdot\mathbb{E}(\emb{V}^{(0)}\emb{P})$}.
Then we focus on the term {\small $\mathbb{E}(\emb{V}^{(0)}\emb{P})$} as follows:
\begin{sequation}
\mathbb{E}(\emb{V}^{(0)}\emb{P}) = \frac{1}{\emb{t}^\mathsf{T}\emb{\Sigma}^{4k} \emb{t}} \emb{U}_1 \emb{\Sigma}^{2k+1} \cdot \mathbb{E}(\emb{t} \emb{t}^\mathsf{T}) \cdot  \emb{\Sigma}^{2k} \emb{U}_2^\mathsf{T} 
\end{sequation}%
Since {\small$\emb{p}^{(0)}$$\sim$$\mathcal{N}(\emb{0}, \emb{I})$} and {\small $\emb{U}_2$} is a unitary matrix, thus {\small $\emb{t}$$\sim$$\mathcal{N}(\emb{0}, \emb{I})$}. 
This indices that each element of {\small $\emb{t}$}, e.g., {\small $t_j$ $\in$ $\emb{t}$}, is \textit{i.i.d.} random variable. Thus, {\small $\mathbb{E}({t}_j \cdot t_{k})=0$} for {\small$j\neq k$} and {\small $\mathbb{E}(\emb{t}\emb{t}^\mathsf{T})$} is a diagonal matrix, i.e., {\small $\mathbb{E}(\emb{t}\emb{t}^\mathsf{T})=\diag(t_1^2, t_2^2, \cdots, t_c^2)$}.
We then have:
\begin{sequation}
\mathbb{E}(\emb{V}^{(0)}\emb{P}) = \emb{U}_1 \cdot \diag \big(\frac{\sigma_1 t_1^2 \sigma_1^{4k}}{\sum_{j=1}^c t_j^2 \sigma_j^{4k}},  \cdots,  \frac{\sigma_c  t_c^2\sigma_c^{4k}}{\sum_{j=1}^c t_j^2 \sigma_j^{4k}}\big) \cdot \emb{U}_2^\mathsf{T}.
\end{sequation}%
Therefore, 
\begin{sequation}
\mathbb{E}(\widetilde{\emb{V}}^{(0)}) = \emb{U}_1 \cdot \diag \big( \sigma_1 - \epsilon \frac{\sigma_1 t_1^2 \sigma_1^{4k}}{\sum_{j=1}^c t_j^2 \sigma_j^{4k}},  \cdots,  \sigma_c - \epsilon \frac{\sigma_c  t_c^2\sigma_c^{4k}}{\sum_{j=1}^c t_j^2 \sigma_j^{4k}} \big)  \cdot \emb{U}_2^\mathsf{T}.
\end{sequation}%
Let {\small $\mu_k = 1 - \epsilon \frac{t_k^2\sigma_k^{4k}}{\sum_{j=1}^c t_j^2 \sigma_j^{4k}}$}, with {\small $\epsilon$ $\in$ $(0,1)$}, obviously, {\small $0<\mu_k<1$}. 
Furthermore, {\small $\forall k_1$ $\geq$ $k_2$}, we have:
\begin{sequation}
\label{eq:increase}
\resizebox{1\linewidth}{!}{$
\displaystyle
\mu_{k_1} - \mu_{k_2} = \epsilon \mathbb{E}(\frac{t_{k_1}^2\sigma_{k_1}^{4k}}{\sum_{j=1}^c t_j^2 \sigma_j^{4k}} - \frac{t_{k_2}^2\sigma_{k_2}^{4k}}{\sum_{j=1}^c t_j^2 \sigma_j^{4k}}) \geq \epsilon\sigma_{k_1}^{4k} \cdot \mathbb{E} (\frac{t_{k_1}^2 - t_{k_2}^2}{\sum_{j=1}^c t_j^2 \sigma_j^{4k}}) =0,
$}
\end{sequation}%
as {\small $\sigma_{k_2}^{4k} \geq \sigma_{k_1}^{4k}$}, and $t_{k_1}$ and $t_{k_2}$ are \textit{i.i.d.} random variables with same normal distribution.
Equation~(\ref{eq:increase}) proves that {\small $\mu_k$} is \textit{monotone non-decreasing} in expectation, which completes the proof.
\end{proof}

\begin{thm}[\textbf{Hamming Distance Matching}]
Given two hash codes, we have $(\alpha_x\emb{Q}_u)^\mathsf{T} \cdot (\alpha_y\emb{Q}_y)$ $=$ $\alpha_x\alpha_y$ $(d - 2D_{H}(\emb{Q}_x, \emb{Q}_y))$.
\end{thm}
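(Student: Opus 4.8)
The plan is to reduce the rescaled inner product to a coordinate-wise agreement count and then identify that count with the Hamming distance. The crucial preliminary observation is that each hash code is produced by $\sign(\cdot)$ in Equation~(\ref{eq:hashing}), so every entry of $\emb{Q}_x$ and $\emb{Q}_y$ lies in $\{-1,+1\}$; the target identity $d - 2D_H$ hinges on this bipolar encoding rather than a $\{0,1\}$ one, so I would state this membership explicitly at the outset.

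First I would factor the two scalar rescaling factors out of the inner product, writing $(\alpha_x\emb{Q}_x)^\mathsf{T}(\alpha_y\emb{Q}_y) = \alpha_x\alpha_y\,\emb{Q}_x^\mathsf{T}\emb{Q}_y$, which isolates the purely combinatorial quantity $\emb{Q}_x^\mathsf{T}\emb{Q}_y = \sum_{i=1}^d Q_{x,i}Q_{y,i}$. Since each product $Q_{x,i}Q_{y,i}$ equals $+1$ exactly when the two codes agree in coordinate $i$ and $-1$ exactly when they disagree, I would partition the $d$ coordinates into an agreeing set of size $a$ and a disagreeing set of size $b$, with $a+b=d$, giving $\emb{Q}_x^\mathsf{T}\emb{Q}_y = a - b$.

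Next I would invoke the definition of Hamming distance: $D_H(\emb{Q}_x,\emb{Q}_y)$ counts precisely the coordinates in which the two codes differ, that is $D_H(\emb{Q}_x,\emb{Q}_y) = b$. Substituting $a = d - b$ then yields $\emb{Q}_x^\mathsf{T}\emb{Q}_y = (d-b) - b = d - 2D_H(\emb{Q}_x,\emb{Q}_y)$, and reinstating the scalar factors $\alpha_x\alpha_y$ completes the claimed identity.

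There is no genuine analytical difficulty here; the only point requiring care --- the nearest thing to an obstacle --- is the bipolar $\{-1,+1\}$ convention for the hash entries, because the clean linear relation between the dot product and the Hamming distance is specific to that encoding and would acquire an extra additive and scaling correction under a $\{0,1\}$ convention. I would therefore anchor the whole argument on the output range of $\sign(\cdot)$ before carrying out the elementary counting.
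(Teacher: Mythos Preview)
Your proposal is correct and follows essentially the same approach as the paper's own proof: both factor out the scalars (the paper does so implicitly), use the bipolar $\{-1,+1\}$ nature of the $\sign(\cdot)$ output to rewrite $\emb{Q}_x^\mathsf{T}\emb{Q}_y$ as the number of agreeing coordinates minus the number of disagreeing coordinates, and then identify the latter with $D_H$. The only cosmetic difference is that the paper splits the agreement count into the two sub-cases $(\emb{Q}_x)_i=(\emb{Q}_y)_i=1$ and $(\emb{Q}_x)_i=(\emb{Q}_y)_i=-1$ before combining, whereas you treat agreements as a single block $a$; the arithmetic is identical.
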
 

\begin{proof}
\begin{equation}
\begin{aligned}
\setlength\abovedisplayskip{2pt}
\setlength\belowdisplayskip{2pt}
\emb{Q}_x^\mathsf{T} \cdot \emb{Q}_y &= \big|\{ i|(\emb{Q}_{x})_i = (\emb{Q}_{y})_i = 1\}\big| +  \big|\{ i|(\emb{Q}_{x})_i = (\emb{Q}_{y})_i = -1\}\big| \\ 
& -  \big|\{ i|(\emb{Q}_{x})_i \neq (\emb{Q}_{y})_i\}\big|\\
& = d - 2 \cdot \big|\{ i|(\emb{Q}_{x})_i \neq (\emb{Q}_{y})_i\}\big|  = \underline{d - 2D_{H}(\emb{Q}_x, \emb{Q}_y))},\\
\end{aligned}
\end{equation}%
which completes the proof.
\end{proof}

\textbf{Training time complexity.}
As shown in Table~\ref{tab:train_time}, $|\mathcal{E}|$, $B$, $s$, and $n$ are the edge number, batch size, numbers of train iterations and Fourier Series decomposition terms.
(1) The time complexity of the graph normalization, i.e., $\emb{D}^{-\frac{1}{2}} \emb{A} \emb{D}^{-\frac{1}{2}}$, is $O(2|\mathcal{E}|)$.
(2) Before the graph convolution, we first conduct the feature dispersion only for the initial node embeddings, e.g., $\emb{V}_x^{(0)}$, which takes $O(\frac{2csK|\mathcal{E}|^2}{B})$ complexity.
In our experiment, hyper-parameter $K \leq 3$.
(3) In graph convolution, the time complexity is $O(\frac{2csL|\mathcal{E}|^2}{B})$, where $L \leq 4$ is a common setting~\cite{lightgcn,ngcf,kipf2016semi,graphsage} to avoid \textit{over-smoothing}~\cite{li2019deepgcns}.
(4) As for the loss computation, \model~takes $O\big(2sc|\mathcal{E}|\big)$ to compute $\mathcal{L}_{rec}$ and $O\big(2sd|\mathcal{E}|\big)$ for $\mathcal{L}_{bpr}$.
(5) Lastly, \model~takes $O(snd|\mathcal{E}|)$ to estimate the gradients for the $d$-dimension hash codes.
Thus, thee total complexity in total is quadratic to the graph edge numbers, i.e., $|\mathcal{E}|$, which is common in GCN frameworks.

 \begin{table}[t]
\setlength{\abovecaptionskip}{0.2cm}
\setlength{\belowcaptionskip}{0.2cm}
\centering
\notsotiny
\caption{Traing time complexity.}
\vspace{-0.05in}
\label{tab:train_time}
\setlength{\tabcolsep}{2mm}{
  \begin{tabular}{c | c | c | c | c }
\toprule
    {Graph Norm.}  & {Feat. Disp.}  & {Conv. \& Hash.}   & { Loss Comp.}  &{Grad. Est.} \\
\midrule[0.1pt]
    { $O(2|\mathcal{E}|)$} & {$O(\frac{2csK|\mathcal{E}|^2}{B})$} & {$O(\frac{2csL|\mathcal{E}|^2}{B})$} & { $O\big(2s(c$+$d)|\mathcal{E}|\big)$} & { $O(snd|\mathcal{E}|)$} \\
\bottomrule
\end{tabular}}
\end{table}

\begin{table}[t]
\setlength{\abovecaptionskip}{0.2cm}
\setlength{\belowcaptionskip}{0.2cm}
\centering
\footnotesize
\caption{Complexity of space and computation.}
\vspace{-0.05in}
\label{tab:prediction}
\setlength{\tabcolsep}{2.2mm}{
\begin{tabular}{c | c | c c}
\toprule
  ~          & {\footnotesize Space cost}   &  {\footnotesize \#FLOP} & {\footnotesize \#BOP}       \\
\midrule
\midrule
 {\scriptsize float32-based}      & {\scriptsize $32|\mathcal{V}_1 \cup \mathcal{V}_2|d$}       &   {\scriptsize $O\big(|\mathcal{V}_1| \cdot |\mathcal{V}_2| d\big)$}      &   {-}         \\
\midrule[0.1pt]
{\scriptsize \model}       & {\scriptsize $|\mathcal{V}_1 \cup \mathcal{V}_2| (d+32(L+1))$}    & {\scriptsize $O\big(4|\mathcal{V}_1| \cdot |\mathcal{V}_2| \big)$}            & {\scriptsize $O\big(|\mathcal{V}_1| \cdot |\mathcal{V}_2| d\big)$}    \\
\bottomrule
\end{tabular}}
\end{table}


\textbf{Hash codes space cost.}
As shown in Table~\ref{tab:prediction}, the total space cost of hash codes is {\small $O(|\mathcal{V}_1 \cup \mathcal{V}_2| (d+32(L+1)))$} bits, supposing that we use float32 for those rescaling factors in $L+1$ iterations.
Compared to the continuous embedding size, i.e., $32|\mathcal{V}_1 \cup \mathcal{V}_2|d$ bits, the theoretical size reduction ratio thus is:
\begin{equation}
\setlength\abovedisplayskip{2pt}
\setlength\belowdisplayskip{2pt}
\label{eq:space}
ratio = \frac{32|\mathcal{V}_1 \cup \mathcal{V}_2|d}{|\mathcal{V}_1 \cup \mathcal{V}_2| (d+32(L+1))} = \frac{32d}{d+32(L+1)}.
\end{equation}
As we just explained, stacking too many iteration layers will incurring performance detriment~\cite{li2019deepgcns}. Hence, if $L\leq4$ and $d=1024$, \model~can achieve considerable size compression. 

\vspace{0.05in}

{\textbf{Online matching.}}
The original score formulation in Equation~(\ref{eq:inner_score}) contains $d$ floating-point operations (\#FLOPs).
As shown in Table~\ref{tab:prediction}, using Hamming distance matching can convert the most of floating-point arithmetics to binary operations (\#BOPs), with slightly more \#FLOPs for scalar computations, i.e., $4\ll d$.

\section{Experiment Setup Details}
\label{app:exp}
{\textbf{Datasets.}}
We evaluate our model on the following six six datasets:
\begin{enumerate}[leftmargin=*]
\item \textbf{MovieLens}\footnote{\url{https://grouplens.org/datasets/movielens/1m/}} is a widely adopted benchmark between \textit{users} and \textit{movies}. Similar to the setting in~\cite{hashgnn,he2016fast}, if the user $x$ has rated item $y$, we set the edge $\emb{Y}_{x,y} = 1$, otherwise $\emb{Y}_{x,y} = 0$. 
\item \textbf{Gowalla}\footnote{\url{https://github.com/gusye1234/LightGCN-PyTorch/tree/master/data/gowalla}}~\cite{ngcf,hashgnn,lightgcn,dgcf} is the dataset~\cite{liang2016modeling} between \textit{customers} and \textit{their check-in locations} collected from Gowalla. 
\item \textbf{Pinterest}\footnote{\url{https://sites.google.com/site/xueatalphabeta/dataset-1/pinterest_iccv}} is an open dataset for image recommendation between \textit{users} and \textit{images}.
Edges represent the pins over images initiated by users. 
\item \textbf{Yelp2018}\footnote{\url{https://github.com/gusye1234/LightGCN-PyTorch/tree/master/data/yelp2018}} is from Yelp Challenge 2018 Edition, bipartitely modeling between \textit{users} and \textit{local businesses}.
\item \textbf{AMZ-Book}\footnote{\url{https://github.com/gusye1234/LightGCN-PyTorch/tree/master/data/amazon-book}} is the bipartite graph between \textit{readers} and \textit{books}, organized from the book collection of Amazon-review~\cite{he2016ups}.  
\item \textbf{Dianping}\footnote{\url{https://www.dianping.com/}} is a commercial dataset between \textit{users} and \textit{local businesses} recording their diverse interactions, e.g., clicking, saving, and purchasing. 
\end{enumerate}

\textbf{Evaluation metrics.}
To evaluate the model performance of Hamming space retrieval over bipartite graphs, we directly deploy our model \model~ in the basic user-item recommendation scenarios.
Specifically, given a query node, we apply the hash codes to match Top-N answers for the query with the closest Hamming distances, and thus adopt two widely-used evaluation protocols Recall@N and NDCG@N to measure the ranking capability.

\vspace{0.05in}

{\textbf{Implementations.}}
We implement our models under Python 3.6 and PyTorch 1.14.0 on a Linux machine with 1 Nvidia GeForce RTX 3090 GPU, 4 Intel Core i7-8700 CPUs, 32 GB of RAM with 3.20GHz.
For all the baselines, we follow the official hyper-parameter settings.
We apply a grid search if lacking recommended model settings.
The dimension is searched in \{$32, 64, 128, 256, 512$\}. 
The learning rate $\eta$ is tuned within \{$10^{-3}, 5\times10^{-3}, 10^{-2}, 5\times10^{-2}$\} and the coefficient $\lambda$ is tuned among \{$10^{-5}, 10^{-4}, 10^{-3}$\}. 
We initialize and optimize all models with default normal initializer and Adam optimizer~\cite{adam}. 

\vspace{0.05in}

\textbf{Baselines.} All baselines studied in this paper are introduced as:
\label{app:baselines}
\begin{enumerate}[leftmargin=*]
\item \textbf{LSH}~\cite{lsh} is a classical hashing method. LSH is proposed to approximate the similarity search for massive high-dimensional data and we introduce it for Top-N object search by following the adaptation in~\cite{hashgnn}. 

\item \textbf{HashNet}~\cite{hashnet} is a representative deep hashing method that is originally proposed for multimedia retrieval tasks.
Similar to~\cite{hashgnn}, we adapt it for graph data by modifying it with the general graph convolutional network.

\item \textbf{CIGAR}~\cite{kang2019candidate} is a state-of-the-art neural-network-based framework for fast Top-N candidate generation in recommendation. 
CIGAR can be further followed by a full-precision re-ranking algorithm. And we only use its hashing part for fair comparison.

\item \textbf{Hash\_Gumbel} is a variance of \model~with Gumbel-softmax for hash encoding and gradient estimation~\cite{gumbel1,gumbel2}.
Specifically, we first expand each embedding bit to a size-two one-hot encoding. 
Then it utilizes the Gumbel-softmax trick to replace $\sign(\cdot)$ as relaxation for binary hash code generation. 

\item \textbf{HashGNN}~\cite{hashgnn} is the state-of-the-art learning to hash based method with GCN framework. 
We use HashGNN$_{h}$ to denote the vanilla version with \textit{hard encoding} proposed in~\cite{hashgnn}, where each element of user-item embeddings is strictly binarized. 
We use HashGNN$_{s}$ to denote its proposed approximated version.

\item \textbf{NeurCF}~\cite{neurcf} is one representative deep neural network model for collaborative filtering in recommendation. 

\item \textbf{NGCF}~\cite{ngcf} is one of the representative graph-based recommender models with collaborative filtering methodology. 

\item \textbf{DGCF}~\cite{dgcf} is a state-of-the-art graph-based model that learns disentangled user intents for better recommendation. 

\item \textbf{LightGCN}~\cite{lightgcn} is another latest state-of-the-art GCN-based recommender model that has been widely evaluated. 

\end{enumerate}

\end{document}